 \newcommand{\At}{\mathrm{\textrm{At}}}
 \newcommand{\ls}{\mathcal{L}_{\Sigma}}
\newcounter{proposition}
 \newtheorem{definition}{Definition}
 \newtheorem{corollary}{Corollary}
 \newtheorem{example}{Example}
\newtheorem{proposition}{Proposition}
\newtheorem{theorem}{Theorem}
\newproof{proof}{Proof}
\begin{document}

\begin{frontmatter}

 \author{Ekaterina Komendantskaya\corref{cor1}}
\ead{ek19@hw.ac.uk}
  \address{Department of Computer Science, Heriot-Watt University, Edinburgh, UK}

\author{John Power}
\ead{A.J.Power@bath.ac.uk}
\address{Department of Computer Science,
 University of Bath, BA2 7AY, UK}

\cortext[cor1]{Corresponding author}

 \title{Logic programming: laxness and saturation\tnoteref{t1}}
\tnotetext[t1]{No data was generated in the course of this research.}

 \newcounter{private}



 \begin{abstract}
A propositional logic program $P$ may be identified with a
$P_fP_f$-coalgebra on the set of atomic propositions in the
program. The corresponding $C(P_fP_f)$-coalgebra, where $C(P_fP_f)$ is
the cofree comonad on $P_fP_f$, describes derivations by resolution.
That correspondence has been developed to model first-order programs in two ways, 
with lax semantics and saturated semantics, based on
locally ordered categories and right Kan extensions respectively. We unify the two
approaches, exhibiting them as complementary rather than competing,
reflecting the theorem-proving and proof-search aspects of logic programming.
While maintaining that unity, we further refine lax semantics to give finitary models of logic progams with existential variables, and to develop a precise semantic relationship between variables in logic programming and worlds in local state. 
\end{abstract}

   \begin{keyword} Logic programming \sep
 coalgebra \sep coinductive derivation tree
\sep
 Lawvere theories\sep lax 
     transformations \sep saturation
 \end{keyword}

\end{frontmatter}

 \section{Introduction}

Over recent years, there has been a surge of interest in category theoretic semantics of logic programming. Research has focused on two ideas: lax semantics, proposed by the current authors and collaborators~\cite{KoPS}, and saturated semantics, proposed by Bonchi and Zanasi~\cite{BonchiZ15}. 
Both ideas are based on coalgebra, agreeing on variable-free logic programs. Both ideas use subtle, well-established category theory,  associated with locally ordered categories and with right Kan extensions respectively~\cite{K}. And both elegantly clarify and extend established logic programming constructs and traditions, for instance~\cite{GC} and~\cite{BMR}.

Until now, the two ideas have been presented as alternatives, competing with each other rather than complementing each other. A central thesis of this paper is that the competition is illusory, the two ideas being two views of a single, elegant body of theory, those views reflecting  different but complementary aspects of logic programming, those aspects  broadly corresponding with the notions of theorem proving and proof search. Such reconciliation has substantial consequences. In particular, it means that whenever one further refines one approach,
as we shall do to the original lax approach in two substantial ways here, one should test whether the proposed refinement also applies to the other approach, and see what consequences it has from the latter perspective. 

The category theoretic basis for both lax and saturated semantics is as follows. It has long been observed, e.g., in~\cite{BM,CLM}, that logic programs induce coalgebras, allowing coalgebraic modelling of their operational semantics. Using the definition of logic program in Lloyd's book~\cite{Llo}, given a set of atoms $At$, one can identify a variable-free logic program $P$ built over $At$ with a
$P_fP_f$-coalgebra structure on $At$, where $P_f$ is the finite
powerset functor on $Set$: each atom is the head of finitely many
clauses in $P$, and the body of each clause contains finitely many
atoms. It was shown in~\cite{KMP} that if $C(P_fP_f)$ is the cofree comonad
on $P_fP_f$, then, given a logic program $P$ qua $P_fP_f$-coalgebra,
the corresponding $C(P_fP_f)$-coalgebra structure characterises the
and-or derivation trees generated by $P$, cf.~\cite{GC}. That fact has formed the basis for our work on lax semantics~\cite{KoPS,KSH14,JohannKK15,FK15,FKSP16} and for Bonchi and Zanasi's work on saturation semantics~\cite{BZ,BonchiZ15}.
 
In attempting to extend the analysis to arbitrary logic programs, both groups followed the tradition of~\cite{ALM,BM,BMR,KP96}: given a signature $\Sigma$
of function symbols, let $\mathcal{L}_{\Sigma}$ denote the Lawvere
theory generated by $\Sigma$, and, given a logic program $P$ with
function symbols in $\Sigma$, consider the functor category
$[\mathcal{L}_{\Sigma}^{op},Set]$, extending the set $At$ of atoms in
a variable-free logic program to the functor from
$\mathcal{L}_{\Sigma}^{op}$ to $Set$ sending a natural number $n$ to
the set $At(n)$ of atomic formulae with at most $n$ variables
generated by the function symbols in $\Sigma$ and the predicate
symbols in $P$. We all sought to model $P$ by a
$[\ls^{op},P_fP_f]$-coalgebra $p:At\longrightarrow P_fP_fAt$ that, at
$n$, takes an atomic formula $A(x_1,\ldots ,x_n)$ with at most $n$
variables, considers all substitutions of clauses in $P$ into clauses
with variables among $x_1,\ldots ,x_n$ whose head agrees with
$A(x_1,\ldots ,x_n)$, and gives the set of sets of atomic formulae in
antecedents, naturally extending the construction for variable-free logic
programs.  However, that idea is too simple for two reasons. We all dealt with the
second problem in the same way, so we shall discuss it later, but the first problem is illustrated by  the following example.

\begin{example} \label{ex:listnat}
 ListNat (for lists of natural numbers) denotes the logic program\\
 $1.\  \mathtt{nat(0)}  \gets $\\
 $2.\  \mathtt{nat(s(x))} \gets  \mathtt{nat(x)}$\\
$3.\  \mathtt{list(nil)}  \gets $ \\
$4.\  \mathtt{list(cons (x, y))}  \gets  \mathtt{nat(x), list(y)}$\\

\noindent
ListNat has nullary function symbols $\mathtt{0}$ and $\mathtt{nil}$, a unary function
symbol $\mathtt{s}$, and a binary function symbol $\mathtt{cons}$. So the signature $\Sigma$ of ListNat contains four elements. 

There
 is a map in $\mathcal{L}_{\Sigma}$ of the form $0\rightarrow 1$ that
 models the nullary function symbol $0$. So, naturality of the map
 $p:At\longrightarrow P_fP_fAt$ in $[\mathcal{L}_{\Sigma}^{op},Set]$
 would yield commutativity of the diagram
\begin{diagram}
At(1) & \rTo^{p_1} & P_fP_fAt(1) \\
\dTo<{At(\mathsf{0})} & & \dTo>{P_fP_fAt(\mathsf{0})} \\
At(0) & \rTo_{p_0} & P_fP_fAt(0)
\end{diagram}
But consider $\mathtt{nat(x)}\in At(1)$: there is no clause of the
form $\mathtt{nat(x)}\gets \, $ in ListNat, so commutativity of the
diagram would imply that there cannot be a clause in ListNat of the
form $\mathtt{nat(0)}\gets \, $ either, but in fact there is one. Thus $p$ is
not a map in the functor category $[\mathcal{L}_{\Sigma}^{op},Set]$.
\end{example}

Proposed resolutions diverged: at CALCO in 2011, we
proposed lax transformations~\cite{KoP}, then at
CALCO 2013, Bonchi and Zanasi proposed saturation
semantics~\cite{BZ}. First we shall describe our approach.

Our approach was to relax the
naturality condition on $p$ to a subset condition, following~\cite{Ben,BKP,Kelly}, so that, given a map in
$\mathcal{L}_{\Sigma}$ of the form $f:n \rightarrow m$, the diagram
\begin{diagram}
At(m) & \rTo^{p_m} & P_fP_fAt(m) \\
\dTo<{At(f)} & & \dTo>{P_fP_fAt(f)} \\
At(n) & \rTo_{p_n} & P_fP_fAt(n)
\end{diagram}
 need not commute, but rather the composite via $P_fP_fAt(m)$ need
 only yield a subset of that via $At(n)$. So, for example,
 $p_1(\mathtt{nat(x)})$ could be the empty set while
 $p_0(\mathtt{nat(0)})$ could be non-empty in the semantics for
 ListNat as required. We extended $Set$ to $Poset$ in order to express
 such laxness, and we adopted established category theoretic research
 on laxness, notably that of~\cite{Kelly}, in order to prove that a
 cofree comonad exists and, on programs such as ListNat, behaves as we wish.
 This agrees with, and is indeed an instance of, He Jifeng and Tony Hoare's use of laxness to model data refinement~\cite{HH,HH1,KP1,P}.

Bonchi and Zanasi's approach was to use saturation semantics~\cite{BZ,BonchiZ15},
following~\cite{BM}. The key category theoretic result that supports it asserts that, regarding $ob(\ls)$, equally $ob(\ls)^{op}$, as a discrete category with inclusion
functor $I:ob(\ls)\longrightarrow \ls$, the functor
\[
[I,Set]:[\ls^{op},Set]\longrightarrow [ob(\ls)^{op},Set]
\]
that sends $H:\ls^{op}\longrightarrow Set$ to the composite
$HI:ob(\ls)^{op} \longrightarrow Set$ has a right
adjoint, given by right Kan extension. The data for \mbox{$p:At\longrightarrow P_fP_fAt$}, although not forming a map in $[\ls^{op},Set]$, may be seen as a map in $[ob(\ls)^{op},Set]$. So, by the adjointness, the data for $p$ corresponds to a map $\bar{p}:At\longrightarrow R(P_fP_fAtI)$ in $[\ls^{op},Set]$, thus to a coalgebra on $At$ in $[\ls^{op},Set]$, where $R(P_fP_fAtI)$ is the right Kan extension of $P_fP_fAtI$ along the inclusion $I$. The right Kan extension is defined by
\[
R(P_fP_fAtI)(n) =  \prod_{m \in \ls} (P_fP_fAt(m))^{\ls(m,n)}
\]
and the function
\[
\bar{p}(n):At(n) \longrightarrow \prod_{m \in \ls} (P_fP_fAt(m))^{\ls(m,n)}
\]
takes an atomic formula $A(x_1,\ldots,x_n)$, and, for every substitution for $x_1,\ldots,x_n$ generated by the signature $\Sigma$, gives the set of sets of atomic formulae in the tails of clauses with head $A(t_1,\ldots,t_n)$, where the $t_i$'s are determined by the substitution. By
construction, $\bar{p}$  is natural, but one quantifies over all possible substitutions for $x_1,\ldots,x_n$ in order to obtain that naturality, and one ignores the laxness of $p$.

As we shall show in Section~\ref{sec:sat}, the two approaches can be unified. If one replaces
\[
[I,Set]:[\ls^{op},Set]\longrightarrow [ob(\ls)^{op},Set]
\]
by the inclusion
\[
[\ls^{op},Poset]\longrightarrow Lax(\ls^{op},Poset)
\]
$[\ls^{op},Set]$ being a full subcategory of $[\ls^{op},Poset]$, one obtains exactly Bonchi and Zanasi's correspondence between $p$ and $\bar{p}$, with exactly the same formula, starting from lax transformations as we proposed. Thus, from a category theoretic perspective, saturation can be seen as complementary to laxness rather than as an alternative to it. This provides a robustness test for future refinements to models of logic programming: a refinement of one view of category theoretic semantics can be tested by its effect on the other. We now turn to such refinements. 

Recently, we have refined lax semantics in two substantial ways, the first of which was the focus of the workshop paper~\cite{KoP1} that this paper extends, and the second of which we introduce here. For the first, a central contribution of lax semantics has been the inspiration
it provided towards the development of an efficient logic programming algorithm~\cite{KoPS,KSH14,JohannKK15,FK15,FKSP16}. That development drew our attention to the semantic significance of \emph{existential} variables: such variables do not appear in ListNat, and they are not needed for a considerable body of logic programming, but they do
appear in logic programs such as the following, which is a leading example in Sterling and Shapiro's book~\cite{SS}.

\begin{example} \label{ex:lp}
 GC (for graph connectivity) denotes the logic program\\
 $1.\  \mathtt{connected(x,x)}  \gets $\\
 $2.\  \mathtt{connected(x,y)}  \gets  \mathtt{edge(x,z)}, \mathtt{connected(z,y)}$\\
\end{example}
There is a variable $z$ in the tail of the second clause of GC that does not
appear in its head, whereas no such variable appears in ListNat. Such a variable is called an existential variable, the presence of which challenges the algorithmic significance of lax semantics. In describing the putative coalgebra
$p:At\longrightarrow P_fP_fAt$ just before Example~\ref{ex:listnat}, we referred
to \emph{all} substitutions of clauses in $P$ into clauses
with variables among $x_1,\ldots ,x_n$ whose head agrees with
$A(x_1,\ldots ,x_n)$. If there are no existential variables, that amounts to
term-matching, which is algorithmically efficient; but if existential
variables do appear, the mere presence of a unary function symbol generates an infinity of such
substitutions, creating algorithmic difficulty, which, when first introducing lax semantics, we,
also Bonchi and Zanasi,
avoided modelling by replacing the outer instance of $P_f$ by $P_c$, thus allowing for countably many
choices. That is the second of the two problems mentioned just before Example~\ref{ex:listnat}. We have long sought a more elegant resolution to that, one that restricts the construction of $p$ to finitely many substitutions. We finally found and presented such a resolution in the workshop paper~\cite{KoP1} that this paper extends. We  both refine it a little more, as explained later, and give more detail here. 

The conceptual key to the resolution was to isolate and give finitary lax semantics to the notion of \emph{coinductive tree}~\cite{KoP2,KoPS}. Coinductive trees arise from term-matching resolution~\cite{KoP2,KoPS}, which is a restriction of SLD-resolution. It
captures the theorem proving aspect of logic programming, which is distinct from,
but complementary with, its problem solving aspect, which is captured by SLD-resolution
~\cite{FK15,JohannKK15}. We called the derivation trees arising from
term-matching coinductive trees in order to mark their connection with coalgebraic logic
programming, which we also developed. Syntactically, one can observe the difference
between lax semantics and saturation semantics in that lax semantics models
coinductive trees, which are finitely branching, whereas saturation involves infinitely
many possible substitutions, leading Bonchi and Zanasi to model different kinds of trees,
their focus being on proof search rather than on theorem proving.

Chronologically,
we introduced lax semantics in 2011 as above~\cite{KoP}; lax semantics inspired us to investigate term-matching and to introduce the notion of coinductive tree~\cite{KoP2}; because of the possibility of existential variables, our lax semantics for coinductive trees, despite inspiring
the notion, was potentially infinitary~\cite{KoPS}; so we have now refined lax semantics to ensure finitariness of the semantics for coinductive trees, even in the presence of existential variables~\cite{KoP1}, introducing it in the workshop paper that this paper extends. We further refine lax semantics here to start to build a precise relationship with the semantics
of local variables~\cite{PP2}, which we plan to develop further in future. We regard
it as positive that lax semantics brings to the fore, in semantic terms, the significance of existential variables, and allows a precise semantic relationship between the role of variables
in logic programming and local variables as they arise in programming more generally.


The paper is organised as follows. In Section~\ref{sec:backr}, we set logic programming
terminology, explain the relationship between term-rewriting and SLD-resolution, and introduce
the notion of coinductive tree. In Section~\ref{sec:parallel}, we give semantics
for variable-free logic programs. This semantics could equally be seen as lax semantics or
saturated semantics, as they agree in the absence of variables. In Section~\ref{sec:recall},
we model coinductive trees for logic programs without existential variables and explain
the difficulty in modelling coinductive trees for arbitrary logic programs. In Section~\ref{sec:sat},
we recall saturation semantics and make precise the relationship between it and lax semantics.
We devote Section~\ref{sec:derivation} of the paper to refining lax semantics,
while maintaining the relationship with saturation semantics, to model the coinductive trees
generated by logic programs with existential variables, and in Section~\ref{sec:local}, we start to build a precise relationship with the semantics of local state~\cite{PP2}.

\section{Theorem proving in logic programming}\label{sec:backr}

 A \emph{signature} $\Sigma$ consists of a set $\mathcal{F}$ of
 function symbols $f,g, \ldots$ each equipped with an arity.  Nullary
 (0-ary) function symbols are constants. For any set $\mathit{Var}$ of
 variables, the set $Ter(\Sigma)$ of terms over $\Sigma$ is defined
 inductively as usual:
 \begin{itemize}
 \item $x \in Ter(\Sigma)$ for every $x \in \mathit{Var}$.
 \item If $f$ is an n-ary function symbol ($n\geq 0$) and $t_1,\ldots
   ,t_n \in Ter(\Sigma) $, then $f(t_1,\ldots
   ,t_n) \in Ter(\Sigma)$.  
 \end{itemize}

A \emph{substitution} over $\Sigma$ is a (total) function $\sigma:
\mathit{Var} \to \mathbf{Term}(\Sigma)$.  Substitutions are extended
from variables to terms as usual: if $t\in \mathbf{Term}(\Sigma)$ and
$\sigma$ is a substitution, then the {\em application} $\sigma(t)$ is
a result of applying $\sigma$ to all variables in $t$.  A substitution
$\sigma$ is a \emph{unifier} for $t, u$ if $\sigma(t) = \sigma(u)$,
and is a \emph{matcher} for $t$ against $u$ if $\sigma(t) = u$.  A
substitution $\sigma$ is a {\em most general unifier} ({\em mgu}) for
$t$ and $u$ if it is a unifier for $t$ and $u$ and is more general
than any other such unifier. A {\em most general matcher} ({\em mgm})
$\sigma$ for $t$ against $u$ is defined analogously.

In line with logic programming (LP) tradition~\cite{Llo}, we consider
a set $\mathcal{P}$ of predicate symbols each equipped with an arity.
It is possible to define logic programs over terms only, in line with the
term-rewriting (TRS) tradition~\cite{Terese}, as
in~\cite{JohannKK15}, but we will follow the usual LP tradition here.
That gives us the following inductive definitions of the sets of
atomic formulae, Horn clauses and logic programs (we also include the
definition of terms for convenience).

\begin{definition}\label{df:syntax}

\

Terms $Ter \ ::= \ Var \ | \ \mathcal{F}(Ter,..., Ter)$

   Atomic formulae (or atoms) $At \ ::= \ \mathcal{P}(Ter,...,Ter)$

  (Horn) clauses $HC \ ::= \ At \gets At,..., At$
	
	Logic programs $Prog \ ::= HC, ... , HC$
\end{definition}

In what follows, we will use letters $A,B,C,D$, possibly with
subscripts, to refer to elements of $At$.

Given a logic program $P$, we may ask whether a given atom is
logically entailed by $P$. E.g., given the program ListNat we may ask
whether $\mathtt{list(cons(0,nil))}$ is entailed by ListNat. The
following rule, which is a restricted form of SLD-resolution, provides
a semi-decision procedure to derive the entailment.

\begin{definition}[Term-matching (TM) Resolution]\label{def:resolution}
{\small
\[\begin{array}{c}

\infer[]
    {P \vdash [\ ] }
    { } \ \ \ \ \ \ \ \
  \infer[\text{if}~( A \gets A_1, \ldots, A_n) \in P]
    {P \vdash \sigma A }
    { P \vdash \sigma A_1 \quad \cdots \quad P \vdash \sigma A_n } 
  \end{array}
\]}
\end{definition}

In contrast, the SLD-resolution rule could be presented in the following form: 
$$B_1, \ldots , B_j, \ldots , B_n \leadsto_P \sigma B_1, \ldots,
\sigma A_1, \ldots, \sigma A_n, \ldots , \sigma B_n $$ if $(A \gets
A_1, \ldots, A_n) \in P$, and $\sigma$ is the mgu of $A$ and $B_{j}$.
The derivation for $A$ succeeds when $A \leadsto_P [\ ]$; we use
$\leadsto_P^*$ to denote several steps of SLD-resolution.

At first sight, the difference between TM-resolution and
SLD-resolution may seem only to be notational.  Indeed, both $ListNat
\vdash \mathtt{list(cons(0,nil))}$ and\\ $ \mathtt{list(cons(0,nil))}
\leadsto^*_{ListNat} [\ ]$ by the above rules (see also
Figure~\ref{pic:tree}). However, $ListNat \nvdash
\mathtt{list(cons(x,y))}$ whereas $ \mathtt{list(cons(x,y))}
\leadsto^*_{ListNat} [\ ]$.  And, even more mysteriously, $GC \nvdash
\mathtt{connected(x,y)}$ while $\mathtt{connected(x,y)} \leadsto_{GC}
       [\ ]$.

In fact, TM-resolution reflects the \emph{theorem proving}
aspect of LP: the rules of Definition~\ref{def:resolution} can be used to
semi-decide whether a given term $t$ is entailed by $P$.  In contrast,
SLD-resolution reflects the \emph{problem solving} aspect of LP: using
the SLD-resolution rule, one asks whether, for a given $t$, a
substitution $\sigma$ can be found such that $P \vdash \sigma(t)$.
There is a subtle but important difference between these two aspects
of proof search.
 
For example, when considering the successful derivation $
\mathtt{list(cons(x,y))}$ $ \leadsto^*_{ListNat} [\ ]$, we assume that
$\mathtt{list(cons(x,y))}$ holds only relative to a computed
substitution, e.g.  $\mathtt{x \mapsto 0, \ y \mapsto nil}$.  Of
course this distinction is natural from the point of view of theorem
proving: $\mathtt{list(cons(x,y))}$ is not a ``theorem" in this
generality, but its special case, $\mathtt{list(cons(0,nil))}$,
is. Thus, $ListNat \vdash \mathtt{list(cons(0,nil))}$ but $ListNat
\nvdash \mathtt{list(cons(x,y))}$ (see also Figure~\ref{pic:tree}).
Similarly, $\mathtt{connected(x,y)} \leadsto_{GC} [\ ]$ should be read
as: $\mathtt{connected(x,y)}$ holds relative to the computed
substitution $\mathtt{y\mapsto x}$.

According to the soundness and completeness theorems for
SLD-resolution~\cite{Llo}, the derivation $\leadsto$ has
\emph{existential} meaning, i.e. when $\mathtt{list(cons(x,y))}
\leadsto^*_{ListNat} [\ ]$, the successful goal
$\mathtt{list(cons(x,y))}$ is not meant to be read as universally
quantified over $\mathtt{x}$ and $\mathtt{y}$.  In contrast,
TM-resolution proves a universal statement. So $GC \vdash
\mathtt{connected(x,x)}$ reads as: $\mathtt{connected(x,x)}$ is
entailed by GC for any $\mathtt{x}$.

 Much of our recent work has been devoted to formal understanding of
 the relation between the theorem proving and problem solving aspects
 of LP~\cite{JohannKK15,FK15}.  The type-theoretic semantics of
 TM-resolution, given by ``Horn clauses as types, $\lambda$-terms as
 proofs" is given in~\cite{FK15,FKSP16}.

Definition~\ref{def:resolution} gives rise to derivation
trees. E.g. the derivation (or, equivalently, the proof) for $ListNat
\vdash \mathtt{list(cons(0,nil))}$ can be represented by the following
derivation tree:

\begin{tikzpicture}[level 1/.style={sibling distance=18mm},
level 2/.style={sibling distance=18mm}, level 3/.style={sibling
distance=18mm},scale=.8,font=\footnotesize,baseline=(current bounding
box.north),grow=down,level distance=10mm]
\node (root) {$\mathtt{list(cons(0,nil))}$}
	child { node {$\mathtt{nat(0)}$}
	child { node {$[\ ]$}
	}}
  	child { node {$\mathtt{list(nil)}$}
		child { node {$[\ ]$}}};
  \end{tikzpicture} 
	
	In general, given a term $t$ and a program $P$, more than one
        derivation for $P \vdash t$ is possible.  For example, if we
        add a fifth clause to the program
        $ListNat$:\\ $5. \ \mathtt{list(cons(0,x)) \gets
          list(x)}$\\ then yet another, alternative, proof is possible
        for the extended program: $ListNat^+ \vdash
        \mathtt{list(cons(0,nil))}$ via Clause $5$:
	
	\begin{tikzpicture}[level 1/.style={sibling distance=18mm},
level 2/.style={sibling distance=18mm}, level 3/.style={sibling
distance=18mm},scale=.8,font=\footnotesize,baseline=(current bounding
box.north),grow=down,level distance=10mm]
\node (root) {$\mathtt{list(cons(0,nil))}$}
	  	child { node {$\mathtt{list(nil)}$}
		child { node {$[\ ]$}}};
  \end{tikzpicture}
	
	To reflect the choice of derivation strategies at every stage
        of the derivation, we introduce a new kind of node called an
        \emph{or-node}.  In our example, this would give us the tree
        shown in Figure~\ref{pic:tree}: note the $\bullet$-nodes.
	
	\begin{figure}[t]
\begin{center}
		\begin{tikzpicture}[level 1/.style={sibling distance=30mm},
level 2/.style={sibling distance=20mm}, level 3/.style={sibling
distance=18mm},scale=.8,font=\footnotesize,baseline=(current bounding
box.north),grow=down,level distance=8mm]
\node (root) {$\mathtt{list(cons(0,nil))}$}
  child {[fill] circle (2pt)
	child { node {$\mathtt{nat(0)}$}
	  child {[fill] circle (2pt)
	child { node {$[\ ]$}
	}}}
  	child { node {$\mathtt{list(nil)}$}
		  child {[fill] circle (2pt)
		child { node {$[\ ]$}}}}}
		  child {[fill] circle (2pt)
				child { node {$\mathtt{list(nil)}$}
				  child {[fill] circle (2pt)
		child { node {$[\ ]$}}}}};
  \end{tikzpicture} \ \ \ \ \
		\begin{tikzpicture}[level 1/.style={sibling distance=30mm},
level 2/.style={sibling distance=20mm}, level 3/.style={sibling
distance=18mm},scale=.8,font=\footnotesize,baseline=(current bounding
box.north),grow=down,level distance=8mm]
\node (root) {$\mathtt{list(cons(x,y))}$}
  child {[fill] circle (2pt)
	child { node {$\mathtt{nat(x)}$}}
  	child { node {$\mathtt{list(y)}$}}};
  \end{tikzpicture}
\end{center}
	\caption{\textbf{Left:} a coinductive tree for
          $\mathtt{list(cons(0,nil))}$ and the extended program
          $ListNat^+$. \textbf{Right:} a coinductive tree for
          $\mathtt{list(cons(x,y))}$ and $ListNat^+$. The
          $\bullet$-nodes mark different clauses applicable to every
          atom in the tree.}
\label{pic:tree} 
	\end{figure}
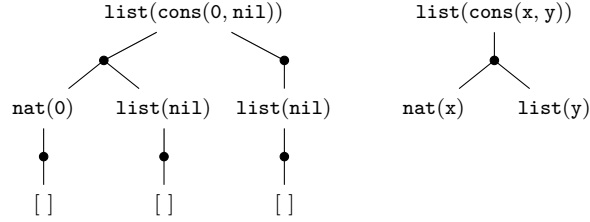
	
	This intuition is made precise in the following definition of
        a \emph{coinductive tree}, which first appeared
        in~\cite{KoP,KoPS} and was  refined in~\cite{JohannKK15}
        under the name of a rewriting tree. Note the use of mgms
        (rather than mgus) in the last item.

\begin{definition}[Coinductive tree]\label{def:cointree}
Let P be a logic program and $A$ be an atomic formula. The
\emph{coinductive tree} for $A$ is the possibly infinite tree T
satisfying the following properties.
\begin{itemize}
\item $A$ is the root of $T$
\item Each node in $T$ is either an and-node or an or-node
\item Each or-node is given by $\bullet$
\item Each and-node is an atom
\item For every and-node $A'$ occurring in $T$, if there is a clause $C_i$ in P of 
  the form $B_i \gets B^i_1,\ldots ,B^i_{n_i}$ for some $n_i$, such
  that $A' = \theta B_i$ for the mgm
  $\theta$, then $A'$ has an or-node, and that or-node has children given by and-nodes $\theta(B^i_j), \ldots
  ,\theta(B^i_{k})$, where $B_j, \ldots , B_k \subseteq B_1, \ldots , B_{n_i}$ and $B_j, \ldots , B_k$ is the maximal such set for which $\theta(B^i_j), \ldots
  ,\theta(B^i_{k})$ are distinct.
\end{itemize}
\end{definition}



Coinductive trees provide a convenient model for proofs by TM-resolution.  

Let us make one final observation on TM-resolution.
Generally, given a program $P$ and an atom $t$, one can prove that 

\begin{center}
$ t \leadsto^*_P [\ ]$ with computed substitution $\sigma$ if and only if $P
  \vdash \sigma t$.
\end{center}

This simple fact may leave the impression that proofs (and correspondingly
coinductive trees) for TM-resolution are in some sense fragments of
reductions by SLD-resolution.  Compare, for example, the right-hand tree of
Figure~\ref{pic:tree} before substitution with the larger left-hand tree
obtained after the substitution.  In this case, we could emulate the
problem solving aspect of SLD-resolution by using coinductive trees
and allowing the application of substitutions within coinductive trees, as was
proposed in~\cite{KoP2,JohannKK15,FK15}.  That works
perfectly for programs such as ListNat, but not for existential
programs: although there is a one step SLD-derivation for $
\mathtt{connected(x,y)} \leadsto_{GC} [\ ]$ (with $\mathtt{y \mapsto
  x}$), the TM-resolution proof for $ \mathtt{connected(x,y)}$ diverges
and gives rise to the following infinite coinductive tree:

\vspace*{-0.1in}
	\begin{tikzpicture}[level 1/.style={sibling distance=30mm},
level 2/.style={sibling distance=25mm}, level 3/.style={sibling
distance=25mm},scale=.8,font=\footnotesize,baseline=(current bounding
box.north),grow=down,level distance=8mm]
  \node {$\mathtt{connected(x,y)}$}
   child {[fill] circle (2pt)
     child { node {$\mathtt{edge(x,z)}$}}
       child { node {$\mathtt{connected(z,y)}$}
        child {[fill] circle (2pt)
       child { node{$\mathtt{edge(x,w)}$}}
         child { node{$\mathtt{connected(w,y)}$}
				child {node{$\vdots$}}}}
       }}; 
  \end{tikzpicture}
	
Not only is the proof for $GC \vdash \mathtt{connected(x,y)}$ not a 
fragment of the derivation $\mathtt{connected(x,y)}
\leadsto_{GC} [\ ]$, but it also requires more  (infinitely many)
variables.  Thus, the operational semantics of TM-resolution and
SLD-resolution can be very different for existential programs, in
regard both to termination and to the number of variables involved.

This issue is largely orthogonal to that of non-termination. Consider the
non-terminating (but not existential) program Bad:

$\mathtt{bad(x)} \gets \mathtt{bad(x)}$\\ For Bad, the operational
behaviours of TM-resolution and SLD-resolution are similar: in both cases, derivations
do not terminate, and both require only finitely many variables.  Moreover,
such programs can be analysed using similar coinductive methods in TM-
and SLD-resolution~\cite{FKSP16,SimonBMG07}.

	The problems caused by existential variables are known in the
        literature on theorem proving and
        term-rewriting~\cite{Terese}.  In TRS~\cite{Terese},
        existential variables are not allowed to appear in rewriting
        rules, and in type inference based on term rewriting or
        TM-resolution, the restriction to non-existential programs is
        common~\cite{Jones97}.


So theorem-proving, in contrast to problem-solving, is modelled by
term-matching; term-matching gives rise to coinductive trees; and as
explained in the introduction and, in more detail, later, coinductive
trees give rise to laxness. So in this paper, we use laxness to model
coinductive trees, and thereby theorem-proving in LP, and we relate our
semantics with Bonchi and Zanasi's saturated semantics, which we believe 
primarily models the problem-solving aspect of logic programming.

Categorical semantics for existential programs, which are known to be
challenging for theorem proving, is a central contribution of
Section~\ref{sec:derivation} and of this paper.

 \section{Semantics for variable-free logic 
programs}\label{sec:parallel}

 In this section, we recall and develop the work of~\cite{KMP}, in regard to variable-free logic programs, i.e., we take $Var = \emptyset$ in Definition~\ref{df:syntax}.
 Variable-free logic programs are operationally equivalent to
 propositional logic programs, as substitutions 
play no role in derivations. In this (propositional) setting,
 coinductive trees coincide with the and-or derivation trees known in
 the LP literature~\cite{GC}, and this semantics appears as the ground case of
both lax semantics~\cite{KoPS} and saturated semantics~\cite{BonchiZ15}.

 \begin{proposition}\label{const:coal}
   For any set $\At$, there is a bijection between the set of
   variable-free logic programs over the set of atoms $\At$ and the set
   of $P_fP_f$-coalgebra structures on $\At$, where $P_f$ is the finite
   powerset functor on $Set$.
 \end{proposition}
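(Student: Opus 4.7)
The plan is to exhibit the bijection by constructing maps in both directions and then verifying that they are mutually inverse. Given a variable-free logic program $P$ over $\At$, I would define $p_P:\At\to P_fP_f\At$ by sending each atom $A$ to the set $\{\{B^i_1,\ldots,B^i_{n_i}\} \mid (A\gets B^i_1,\ldots,B^i_{n_i})\in P\}$. First I would observe that this really does land in $P_fP_f\At$: the inner $P_f$ is justified because each Horn clause has a finite body, and the outer $P_f$ is justified because, in the sense of Lloyd's definition recalled just before Example~\ref{ex:listnat}, each atom is the head of only finitely many clauses in $P$. Conversely, given a $P_fP_f$-coalgebra $p:\At\to P_fP_f\At$, I would define a program $P_p$ consisting of exactly one Horn clause $A\gets B_1,\ldots,B_n$ for each $A\in\At$ and each $\{B_1,\ldots,B_n\}\in p(A)$, after fixing an enumeration of that finite set.

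For these assignments to be mutually inverse, one needs to pin down when two variable-free logic programs are identified. The appropriate convention, which the grammar in Definition~\ref{df:syntax} leaves implicit but which matches the standard LP reading, is to identify two Horn clauses whenever their heads agree and their bodies agree as finite \emph{subsets} of $\At$, and to regard a program as a \emph{set} of such equivalence classes of clauses. This identification is operationally inert for propositional programs, since reordering or duplicating atoms in a clause body, and duplicating clauses in a program, have no effect on TM- or SLD-derivability. Under this convention, unwinding definitions shows that $P\mapsto p_P\mapsto P_{p_P}$ and $p\mapsto P_p\mapsto p_{P_p}$ are both the identity.

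The only non-routine step is the choice of equivalence described above; once one accepts it, the bijection is immediate. I would accordingly devote the bulk of the argument to making that convention explicit and checking that no information is lost in passing from a program, as a formal list of Horn clauses, to the coalgebra $p_P$ of head-indexed families of body-sets.
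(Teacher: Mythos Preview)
Your proposal is correct. The paper does not actually supply a proof of this proposition: it is stated without proof in Section~\ref{sec:parallel}, with the only justification being the one-sentence gloss in the introduction (``each atom is the head of finitely many clauses in $P$, and the body of each clause contains finitely many atoms''). Your argument unpacks exactly that gloss, and your explicit discussion of the equivalence on programs---identifying clause bodies as finite subsets and programs as sets of clauses---is more careful than anything the paper provides, though it is precisely the convention the authors are tacitly assuming when they invoke Lloyd's definition.
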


 \begin{theorem}\label{constr:Gcoalg}
   Let $C(P_fP_f)$ denote the cofree comonad on $P_fP_f$. Then, given a logic
program $P$ over $\At$, equivalently $p:
   \At \longrightarrow P_f P_f(\At)$, the corresponding
   $C(P_fP_f)$-coalgebra $\overline{p}: \At \longrightarrow C(P_fP_f)(\At)$
sends an atom $A$ to the coinductive tree for $A$.
\end{theorem}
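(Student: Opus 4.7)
The plan is to construct $C(P_fP_f)$ explicitly as a final coalgebra and then identify the canonical comparison map arising from the universal property with the coinductive-tree-forming operation.

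First, I would recall the standard construction of the cofree comonad on an accessible endofunctor. For any set $X$, the endofunctor $F_X : Y \mapsto X \times P_fP_f(Y)$ on $Set$ is finitary, so it admits a final coalgebra, whose carrier we take to be $C(P_fP_f)(X)$. Elements of $C(P_fP_f)(X)$ are therefore precisely (isomorphism classes of) rooted trees whose root carries a label in $X$, which branches into a finite set of unlabelled or-nodes, each of which branches into a finite set of subtrees of the same shape, continuing coinductively. This is exactly the shape of the trees appearing in Figure~\ref{pic:tree} and in Definition~\ref{def:cointree}, modulo the labelling of the and-nodes.

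Second, I would invoke the universal property: given any $P_fP_f$-coalgebra $p:\At \to P_fP_f(\At)$, there is a unique $F_{\At}$-coalgebra morphism $\overline{p}:\At \to C(P_fP_f)(\At)$ whose first projection is the identity on $\At$ and whose second projection equals $P_fP_f(\overline{p})\circ p$. Unpacking the coinductive definition given by this fixed-point equation, $\overline{p}(A)$ is the unique tree whose root is labelled $A$, whose or-successors are in bijection with the elements of $p(A)\in P_fP_f(\At)$, and for which each or-successor corresponding to $\{B^i_1,\dots,B^i_{n_i}\}\in p(A)$ has the and-subtrees $\overline{p}(B^i_1),\dots,\overline{p}(B^i_{n_i})$.

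Finally, I would match this description against Definition~\ref{def:cointree}. Under the bijection of Proposition~\ref{const:coal}, the elements of $p(A)$ are exactly the bodies $\{B^i_1,\dots,B^i_{n_i}\}$ of clauses in $P$ whose head is $A$. In the variable-free setting, the mgm $\theta$ in Definition~\ref{def:cointree} is forced to be the identity, so the or-successors and their and-children in $\overline{p}(A)$ coincide with the or-nodes and and-nodes prescribed by the coinductive tree for $A$, and the same analysis at each and-child extends this identification to the whole tree. Uniqueness of $\overline{p}$ together with the coinductive uniqueness of the coinductive tree then yields the stated equality.

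The main obstacle is not a deep technical one; it is bookkeeping. One has to make precise that the trees defined syntactically in Definition~\ref{def:cointree} genuinely form an $F_{\At}$-coalgebra isomorphic to $C(P_fP_f)(\At)$, i.e., are closed under the coinductive unfolding, and that the map sending each atom to its coinductive tree is an $F_{\At}$-coalgebra morphism. Once that is set up, the theorem follows from finality.
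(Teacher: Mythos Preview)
Your proposal is correct and follows essentially the same route as the paper: both identify $C(P_fP_f)(\At)$ with the final $F_{\At}$-coalgebra for $F_{\At}(Y)=\At\times P_fP_f(Y)$ and then argue by coinduction that $\overline{p}$ unfolds an atom to its coinductive tree. The only presentational difference is that the paper makes the terminal sequence explicit, defining approximants $p_0=id$ and $p_{n+1}=\langle id,\,P_fP_f(p_n)\circ p\rangle$ into $\At_{n+1}=\At\times P_fP_f(\At_n)$ and reading off the first $2n{+}1$ levels of the tree from $p_n$, whereas you invoke finality directly; the paper's formulation buys a concrete level-by-level correspondence (exploited in the worked example immediately following the theorem), while yours is a cleaner one-shot appeal to the universal property.
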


\begin{proof}

Applying the work of~\cite{W} to this setting, the cofree comonad is
in general determined as follows: $C(P_fP_f)(\At)$ is the limit of the
diagram
 $$\ldots \longrightarrow \At \times P_fP_f(\At \times P_fP_f(\At))
\longrightarrow \At \times P_fP_f(\At) \longrightarrow \At$$ with
maps determined by the projection $\pi_0:At\times
P_fP_f(At)\longrightarrow At$, with applications of the functor $At
\times P_fP_f(-)$ to it.

 Putting $\At_0 = \At$ and $\At_{n+1} = \At \times
 P_fP_f\At_n$, and defining the cone
 \begin{eqnarray*}
   p_0 & = & id: \At \longrightarrow \At ( = \At_0)\\
   p_{n+1} & = & \langle id, P_fP_f(p_n) \circ p \rangle : \At
   \longrightarrow \At \times P_fP_f \At_n ( = \At_{n+1})
 \end{eqnarray*}
 the limiting property of the diagram determines the coalgebra
 $\overline{p}: \At \longrightarrow C(P_fP_f)(\At)$.  The image
 $\overline{p}(A)$ of an atom $A$ is given by an element of the limit,
 equivalently a map from $1$ into the limit, equivalently a cone of
 the diagram over $1$.

To give the latter is equivalent to giving an element $A_0$ of $At$,
specifically $p_0(A) = A$, together with an element $A_1$ of $At\times
P_fP_f(At)$, specifically $p_1(A) = (A,p_0(A)) = (A,p(A))$, together
with an element $A_2$ of $At\times P_fP_f(At\times P_fP_f(At))$,
etcetera. The definition of the coinductive tree for $A$ is inherently
coinductive, matching the definition of the limit, and with the first
step agreeing with the definition of $p$. Thus it follows by
coinduction that $\overline{p}(A)$ can be identified with the
coinductive tree for $A$.
\end{proof}

\begin{example} \label{ex:free}
Let $At$ consist of atoms $\mathtt{A,B,C}$ and $\mathtt{D}$. Let $P$
denote the logic program
 \begin{eqnarray*}
 \mathtt{A} & \gets & \mathtt{B,C} \\
\mathtt{A} & \gets & \mathtt{B,D} \\
\mathtt{D} & \gets & \mathtt{A,C}\\
 \end{eqnarray*}
So $p(\mathtt{A}) = \{ \{ \mathtt{B,C}\} , \{ \mathtt{B,D} \} \}$,
$p(\mathtt{B}) = p(\mathtt{C}) = \emptyset$, and $p(\mathtt{D}) = \{
\{ \mathtt{A,C}\} \}$.

Then $p_0(\mathtt{A}) = \mathtt{A}$, which is the root of the
coinductive tree for $\mathtt{A}$.

Then $p_1(\mathtt{A}) = (\mathtt{A},p(\mathtt{A})) = (\mathtt{A},\{ \{
\mathtt{B,C}\} , \{ \mathtt{B,D} \} \})$, which consists of the same
information as in the first three levels of the coinductive tree for
$\mathtt{A}$, i.e., the root $\mathtt{A}$, two or-nodes, and below
each of the two or-nodes, nodes given by each atom in each antecedent
of each clause with head $\mathtt{A}$ in the logic program $P$: nodes
marked $\mathtt{B}$ and $\mathtt{C}$ lie below the first or-node, and
nodes marked $\mathtt{B}$ and $\mathtt{D}$ lie below the second
or-node, exactly as $p_1(\mathtt{A})$ describes.

Continuing, note that $p_1(\mathtt{D}) = (\mathtt{D},p(\mathtt{D})) =
(\mathtt{D},\{ \{\mathtt{A,C}\} \})$. So
\[
\begin{array}{ccl}
p_2(\mathtt{A}) & = & (\mathtt{A},P_fP_f(p_1)(p(\mathtt{A})))\\
            & = & (\mathtt{A},P_fP_f(p_1)( \{ \{ \mathtt{B,C}\} , \{ \mathtt{B,D} \} \}))\\
            & = & (\mathtt{A}, \{ \{( \mathtt{B},\emptyset),(\mathtt{C},\emptyset)\} , \{( \mathtt{B},\emptyset),(\mathtt{D},\{ \{\mathtt{A,C}\} \}) \} \})
\end{array}
\]
which is the same information as that in the first five levels of the
coinductive tree for $\mathtt{A}$: $p_1(\mathtt{A})$ provides the
first three levels of $p_2(\mathtt{A})$ because $p_2(\mathtt{A})$ must
map to $p_1(\mathtt{A})$ in the cone; in the coinductive tree, there
are two and-nodes at level 3, labelled by $\mathtt{A}$ and
$\mathtt{C}$. As there are no clauses with head $\mathtt{B}$ or
$\mathtt{C}$, no or-nodes lie below the first three of the and-nodes
at level 3. However, there is one or-node lying below $\mathtt{D}$, it
branches into and-nodes labelled by $\mathtt{A}$ and $\mathtt{C}$,
which is exactly as $p_2(\mathtt{A})$ tells us. For picture of this tree, see~Figure~\ref{fig:gtree}.
\end{example}

\begin{figure}[!h]
\begin{center}
  \begin{tikzpicture}[scale=0.9,baseline=(current bounding box.north),grow=down,level distance=7mm,  level 1/.style={sibling
distance=25mm}, level 2/.style={sibling
distance=10mm}, sibling distance = 10mm]
\node {$\mathtt{A}$}
child { [fill] circle (2pt)
  child { node {$\mathtt{B}$}}
	 child { node {$\mathtt{C}$}}}
     child {[fill] circle (2pt)
         child {node {$\mathtt{B}$}}
				child {node {$\mathtt{D}$}
            child {[fill] circle (2pt)
               child { node {$A$}
							child { node {$\ldots$} }}
							child { node {$C$} }}}}; 
  \end{tikzpicture}
\end{center}
\caption{The coinductive tree for $\texttt{A}$ and the program $P$ from Example~\ref{ex:free}.}
\label{fig:gtree} 
\end{figure}
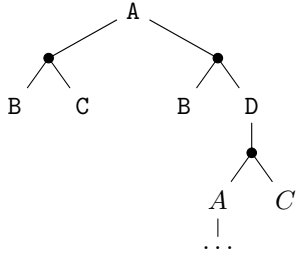

 \section{Lax semantics for logic programs}\label{sec:recall}

We now lift the restriction on $Var = \emptyset$ in
Definition~\ref{df:syntax} and consider first-order terms and atoms
in full generality.

%
%
There are several equivalent ways in which to describe the Lawvere theory generated by a signature. So, for precision, in this
paper, we define the \emph{Lawvere theory} $\mathcal{L}_{\Sigma}$ \emph{generated by} a
signature $\Sigma$ as follows:
$\texttt{ob}(\mathcal{L}_{\Sigma})$ is the set of natural numbers.
For each natural number $n$, let $x_1,\ldots ,x_n$ be a specified list
of distinct variables. Define $\mathcal{L}_{\Sigma}(n,m)$ to be the
set of $m$-tuples $(t_1,\ldots ,t_m)$ of terms generated by the
function symbols in $\Sigma$ and variables $x_1,\ldots ,x_n$. Define
composition in $\mathcal{L}_{\Sigma}$ by substitution.

 One can readily check that these constructions satisfy the axioms for
 a category, with $\mathcal{L}_{\Sigma}$ having
 strictly associative finite products given by the sum of natural
 numbers. The terminal object of $\mathcal{L}_{\Sigma}$ is the natural
 number $0$. There is a canonical identity-on-objects functor from $Nat^{op}$ to $\ls$,
just as there is for any Lawvere theory, and it strictly preserves finite products.

 \begin{example}\label{ex:arrows}
   Consider ListNat. 
   The constants $\mathtt{O}$ and $\mathtt{nil}$ are maps
   from $0$ to $1$ in $\mathcal{L}_{\Sigma}$, $\mathtt{s}$ is modelled by
   a map from $1$ to $1$, and $\mathtt{cons}$ is modelled by a map from
   $2$ to $1$. The term $\mathtt{s(0)}$ is the map
   from $0$ to $1$ given by the composite of the maps modelling
   $\mathtt{s}$ and $\mathtt{0}$. 
 \end{example}

Given an arbitrary logic program $P$ with signature $\Sigma$,
 we can extend the set $At$ of atoms for a variable-free
 logic program to the functor $At:\ls^{op}
 \rightarrow Set$ that sends a natural number $n$ to the set of all
 atomic formulae, with variables among $x_1,\ldots ,x_n$, generated by
 the function symbols in $\Sigma$ and by the predicate symbols in $P$. 
A map $f:n \rightarrow m$ in $\ls$ is sent to
 the function $At(f):At(m) \rightarrow At(n)$ that sends an atomic
 formula $A(x_1, \ldots,x_m)$ to $A(f_1(x_1, \ldots ,x_n)/x_1, \ldots
 ,f_m(x_1, \ldots ,x_n)/x_m)$, i.e., $At(f)$ is defined by
 substitution.

 As explained in the Introduction and in~\cite{KMP}, we cannot model a
 logic program by a natural transformation of the form
 $p:At\longrightarrow P_fP_fAt$ as naturality breaks down, e.g., in
 ListNat. So, in~\cite{KoP,KoPS}, we relaxed naturality to lax
 naturality.  In order to define it, we extended
 $At:\ls^{op}\rightarrow Set$ to have codomain $Poset$ by composing
 $At$ with the inclusion of $Set$ into $Poset$. Mildly overloading
 notation, we denote the composite by $At:\ls^{op}\rightarrow Poset$.
 
 \begin{definition}
   Given functors $H,K:\ls^{op} \longrightarrow Poset$, a {\em lax
     transformation} from $H$ to $K$ is the assignment to each object
   $n$ of $\ls$, of an order-preserving function $\alpha_n: Hn
   \longrightarrow Kn$ such that for each map $f:n \longrightarrow m$
   in $\ls$, one has $(Kf)(\alpha_m) \leq (\alpha_{n})(Hf)$, pictured
   as follows:
\begin{diagram}
Hm & \rTo^{\alpha_m} & Km \\
\dTo<{Hf} & \geq & \dTo>{Kf} \\
Hn & \rTo_{\alpha_n} & Kn
\end{diagram}
 \end{definition}
 
 Functors and lax transformations, with pointwise composition, form a
 locally ordered category denoted by $Lax(\ls^{op},Poset)$. Such
 categories and generalisations have been studied extensively, e.g.,
 in~\cite{Ben,BKP,Kelly,KP1}.

 \begin{definition}\label{def:poset}
 Define $P_f:Poset\longrightarrow Poset$ by letting
   $P_f(P)$ be the partial order given by the set of finite subsets of
   $P$, with $A\leq B$ if for all $a \in A$, there exists
   $b \in B$ for which $a\leq b$ in $P$, with behaviour on maps
   given by image. Define $P_c$ similarly but with countability
   replacing finiteness.
 \end{definition}

We are not interested in arbitrary posets in modelling logic
programming, only those that arise, albeit inductively, by taking
subsets of a set qua discrete poset. So we gloss over the fact that,
for an arbitrary poset $P$, Definition~\ref{def:poset} may yield
factoring, with the underlying set of $P_f(P)$ being a quotient of the
set of subsets of $P$. It does not affect the line of development
here.

\begin{example}\label{ex:listnat2}
Modelling Example~\ref{ex:listnat}, ListNat generates a lax
transformation of the form $p:At\longrightarrow P_fP_fAt$ as follows:
$At(n)$ is the set of atomic formulae in $ListNat$ with at most $n$
variables.

For example, $At(0)$ consists of $\mathtt{nat(0)}$,
$\mathtt{nat(nil)}$, $\mathtt{list(0)}$, $\mathtt{list(nil)}$,
$\mathtt{nat(s(0))}$, $\mathtt{nat(s(nil))}$, $\mathtt{list(s(0))}$,
$\mathtt{list(s(nil))}$, $\mathtt{nat(cons(0, 0))}$,
$\mathtt{nat(cons( 0, nil))}$, \\ $\mathtt{nat(cons (nil, 0))}$,
$\mathtt{nat(cons( nil, nil))}$, etcetera.

Similarly, $At(1)$ includes all atomic formulae containing at most one
(specified) variable $x$, thus all the elements of $At(0)$ together
with $\mathtt{nat(x)}$, $\mathtt{list(x)}$, $\mathtt{nat(s(x))}$,
$\mathtt{list(s(x))}$, $\mathtt{nat(cons( 0, x))}$, $\mathtt{nat(cons
  (x, 0))}$, $\mathtt{nat(cons (x, x))}$, etcetera.

The function $p_n:At(n)\longrightarrow P_fP_fAt(n)$ sends each element
of $At(n)$, i.e., each atom $A(x_1,\ldots ,x_n)$ with variables among
$x_1,\ldots ,x_n$, to the set of sets of atoms in the antecedent of
each unifying substituted instance of a clause in $P$ with head for
which a unifying substitution agrees with $A(x_1,\ldots ,x_n)$.

Taking $n=0$, $\mathtt{nat(0)}\in At(0)$ is the head of one clause,
and there is no other clause for which a unifying substitution will
make its head agree with $\mathtt{nat(0)}$. The clause with head
$\mathtt{nat(0)}$ has the empty set of atoms as its tail, so
$p_0(\mathtt{nat(0)}) = \{ \emptyset \}$.

Taking $n=1$, $\mathtt{list(cons( x, 0))}\in At(1)$ is the head of one
clause given by a unifying substititution applied to the final clause
of ListNat, and accordingly $p_1(\mathtt{list(cons (x, 0))}) = \{ \{
\mathtt{nat(x)},\mathtt{list(0)} \} \}$.

The family of functions $p_n$ satisfy the inequality required to form
a lax transformation precisely because of the allowability of
substitution instances of clauses, as in turn is required to model
logic programming. The family does not satisfy the strict requirement
of naturality as explained in the introduction.
\end{example}

\begin{example}\label{ex:lp2}
Attempting to model Example~\ref{ex:lp}, that of graph connectedness, GC, by mimicking the modelling of
ListNat 
in Example~\ref{ex:listnat2}, i.e., defining the function \mbox{$p_n:At(n)\longrightarrow P_fP_fAt(n)$} by sending each element
of $At(n)$, i.e., each atom $A(x_1,\ldots ,x_n)$ with variables among
$x_1,\ldots ,x_n$, to the set of sets of atoms in the antecedent of
each unifying substituted instance of a clause in $P$ with head for
which a unifying substitution agrees with $A(x_1,\ldots ,x_n)$, fails.

Consider the clause
\[
 \mathtt{connected(x,y)}  \gets  \mathtt{edge(x,z)},
 \mathtt{connected(z,y)} 
\]

Modulo possible renaming of variables, the head of the clause, i.e.,
the atom $\mathtt{connected(x,y)}$, lies in $At(2)$ as it has two
variables. There is trivially only one substituted instance of a clause in GC with head for
which a unifying substitution agrees with $\mathtt{connected(x,y)}$, and the singleton set 
consisting of the set of atoms in its antecedent
is $\{ \{ \mathtt{edge(x,z)},\mathtt{connected(z,y)} \} \}$, which does not lie in $P_fP_fAt(2)$ as it has three variables appear in it rather than two. See Section~\ref{sec:backr}
 for a picture of the coinductive tree for $\mathtt{connected(x,y)}$.

We dealt with that inelegantly in~\cite{KoP}: in order to force
$p_2(\mathtt{connected(x,y)})$ to lie in $P_fP_fAt(2)$ and model GC in any reasonable 
sense, we
allowed substitutions for $z$ in $\{ \{ \mathtt{edge(x,z)},\mathtt{connected(z,y)} \} \}$
by any term on $x,y$ on the basis that
there is no unifying such, so we had better allow all
possibilities. So, rather than modelling the clause directly,
recalling that $At(2)\subseteq At(3)\subseteq At(4)$, etcetera, modulo
renaming of variables, we put

{\small{
\begin{eqnarray*}
p_2(\mathtt{connected(x,y)}) & = & \{
\{\mathtt{edge(x,x)},\mathtt{connected(x,y)}\},
\{\mathtt{edge(x,y)},\mathtt{connected(y,y)}\}\}\\ p_3(\mathtt{connected(x,y)})
& = & \{ \{\mathtt{edge(x,x)},\mathtt{connected(x,y)}\},
\{\mathtt{edge(x,y)},\mathtt{connected(y,y)}\},\\ & &
\{\mathtt{edge(x,z)},\mathtt{connected(z,y)}\}
\}\\ p_4(\mathtt{connected(x,y)}) & = & \{
\{\mathtt{edge(x,x)},\mathtt{connected(x,y)}\},
\{\mathtt{edge(x,y)},\mathtt{connected(y,y)}\},\\ & &
\{\mathtt{edge(x,z)},\mathtt{connected(z,y)}\},\{\mathtt{edge(x,w)},\mathtt{connected(w,y)}\}
\}
\end{eqnarray*}}}
etcetera: for $p_2$, as only two variables $x$ and $y$ appear in any
element of $P_fP_fAt(2)$, we allowed substitution by either $x$ or $y$
for $z$; for $p_3$, a third variable may appear in an element of
$P_fP_fAt(3)$, allowing an additional possible subsitution; for $p_4$,
a fourth variable may appear, etcetera.

Countability arises if a unary symbol $s$ is added to GC, as in that
case, for $p_2$, not only did we allow $x$ and $y$ to be substituted
for $z$, but we also allowed $s^n(x)$ and $s^n(y)$ for any $n>0$, and
to do that, we replaced $P_fP_f$ by $P_cP_f$, allowing for the
countably many possible substitutions.

Those were  inelegant
decisions, but they allowed us to give some kind of model of all logic programs.
We shall revisit this in Section~\ref{sec:derivation}.
\end{example}


We shall refine lax semantics to account for existential variables later, so
for the present, we shall ignore Example~\ref{ex:lp2} and only analyse
semantics for logic programs without existential variables such as in Example~\ref{ex:listnat2}.
Specifically, we shall analyse the relationship between a lax transformation
$p:At\longrightarrow P_fP_fAt$ and
$\overline{p}:At\longrightarrow C(P_fP_f)At$, the corresponding
coalgebra for the cofree comonad $C(P_fP_f)$ on $P_fP_f$.

We recall the central abstract result of~\cite{KoP}, the notion of an
``oplax" map of coalgebras being required to match that of lax
transformation. Notation of the form \mbox{$H$-$coalg$} refers to
coalgebras for an endofunctor $H$, while notation of the form
\mbox{$C$-$Coalg$} refers to coalgebras for a comonad $C$. The subscript
$oplax$ refers to oplax maps and, given an
endofunctor $E$ on $Poset$, the notation $Lax(\ls^{op},E)$ denotes the
endofunctor on $Lax(\ls^{op},Poset)$ given by post-composition with
$E$; similarly for a comonad.

 \begin{theorem}\label{main}\cite{KoP}
 For any locally ordered endofunctor $E$ on $Poset$, if $C(E)$ is
 the cofree comonad on $E$, then there is a canonical isomorphism
 \[
 Lax(\ls^{op},E)\mbox{-}coalg_{oplax} \simeq
 Lax(\ls^{op},C(E))\mbox{-}Coalg_{oplax}
 \]
 \end{theorem}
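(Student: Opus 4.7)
The plan is to reduce the claim to the classical correspondence between $F$-coalgebras and Eilenberg--Moore coalgebras for the cofree comonad $C(F)$, but applied in the 2-category of locally ordered endofunctors on $Lax(\ls^{op},Poset)$. The pivotal step is to identify $Lax(\ls^{op},C(E))$ as the cofree comonad on $Lax(\ls^{op},E)$ regarded as an endofunctor on $Lax(\ls^{op},Poset)$. Once this identification is established, the theorem follows from the generic Eilenberg--Moore correspondence, with the matching of oplax morphisms on both sides coming for free from the 2-categorical form of that correspondence.

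To identify the cofree comonad, I would invoke Worrell's construction used in the proof of Theorem~1: $C(E)(X)$ is the limit of the tower built from iterated applications of $X \mapsto X \times E(-)$, starting from $X$. Limits in the functor 2-category $Lax(\ls^{op},Poset)$ are computed pointwise, and $Lax(\ls^{op},E)$ is defined by pointwise post-composition with $E$. Consequently, applying the analogous tower to an object $H \in Lax(\ls^{op},Poset)$ yields, at each natural number $n$, the limit $C(E)(Hn)$, which is precisely $Lax(\ls^{op},C(E))(H)(n)$. The counit and comultiplication on $Lax(\ls^{op},C(E))$ are then inherited pointwise from those on $C(E)$, and the universal property follows pointwise as well.

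With $Lax(\ls^{op},C(E))$ identified as the cofree comonad, the isomorphism sends a lax transformation $p: H \to EH$ (i.e.\ a $Lax(\ls^{op},E)$-coalgebra) to the unfolding $\bar{p}: H \to C(E)H$ determined componentwise by the classical cone formula $\bar{p}_{n,0} = id_{Hn}$ and $\bar{p}_{n,k+1} = \langle id, E(\bar{p}_{n,k}) \circ p_n\rangle$, with the inverse given by post-composition with the canonical $E$-coalgebra structure of $C(E)$. Since this bijection is defined entirely in terms of the cone data and is the identity on the underlying maps of homomorphisms, the subscript $oplax$ on both sides matches automatically: an oplax morphism of $Lax(\ls^{op},E)$-coalgebras is exactly an oplax morphism of the associated $Lax(\ls^{op},C(E))$-coalgebras.

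The main obstacle will be checking that $\bar{p}$ really is a lax transformation when $p$ is. The argument is by induction on the level $k$ of the tower: the base case is trivial, and the inductive step combines the lax naturality of $p$ with the local order-preservation of $E$ to yield $(C(E)(Hf))_k \circ \bar{p}_{m,k} \leq \bar{p}_{n,k} \circ Hf$ for every $f:n \to m$ in $\ls$. Because the connecting maps in the tower are monotone and $E$ is locally ordered, these finite-stage inequalities cohere through the limit, delivering the required inequality $(C(E)Hf) \circ \bar{p}_m \leq \bar{p}_n \circ Hf$. The same inductive bookkeeping, applied to a morphism of coalgebras in place of a single coalgebra, shows that oplax cells are preserved and reflected, completing the isomorphism of categories.
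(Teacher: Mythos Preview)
Your approach is sound and aligns with the paper's treatment, though note that the paper does not actually prove this theorem here: it is imported from~\cite{KoP} and only glossed in the paragraph following the statement. That gloss says precisely what you say---the $E$-coalgebra/$C(E)$-coalgebra correspondence extends pointwise from $Poset$ to $Lax(\ls^{op},Poset)$, with laxness of transformations matched by oplaxness of coalgebra maps---so your reconstruction is faithful to the intended argument.

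One structural difference worth flagging: the paper orders the logic oppositely to you. It takes the coalgebra-category isomorphism as the primitive result (Theorem~\ref{main}) and then \emph{derives} that $Lax(\ls^{op},C(E))$ is the cofree comonad on $Lax(\ls^{op},E)$ as Corollary~\ref{oldcor}. You instead establish the cofree comonad identification first (via the pointwise Worrell tower) and then invoke the generic Eilenberg--Moore correspondence to obtain the isomorphism. Both directions are valid; yours is arguably the more natural proof architecture, while the paper's ordering emphasises the coalgebra isomorphism as the conceptually primary fact. Your inductive check that $\bar{p}$ inherits lax naturality stage-by-stage, using local order-preservation of $E$, is the substantive work either route requires, and you have identified it correctly.
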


Theorem~\ref{main} tells us that for any endofunctor $E$ on $Poset$, the relationship
between $E$-coalgebras and $C(E)$-coalgebras extends pointwise from $Poset$ 
to $Lax(\ls^{op},Poset)$ providing one matches lax natural transformations by oplax
maps of coalgebras. It follows that, given an endofunctor $E$ on $Poset$ with cofree 
comonad $C(E)$, the cofree comonad for the endofunctor on $Lax(\ls^{op},Poset)$ sending
$H:\ls^{op}\longrightarrow Poset$ to the composite $EH:\ls^{op}\longrightarrow Poset$
sends $H$ to the composite $C(E)H$. Taking the example $E = P_fP_f$ allows us to conclude the following.

\begin{corollary}\label{oldcor}\cite{KoP}
$Lax(\ls^{op},C(P_fP_f))$ is the cofree
comonad on $Lax(\ls^{op},P_fP_f)$.
\end{corollary}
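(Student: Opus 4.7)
The plan is to derive this corollary as an immediate specialisation of Theorem~\ref{main}, with the universal property of the cofree comonad supplying the bridge from an isomorphism of coalgebra categories to the identification of the comonad itself.

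First, I would verify the hypothesis of Theorem~\ref{main} for $E = P_fP_f$: namely that $P_fP_f:Poset\longrightarrow Poset$, as given in Definition~\ref{def:poset}, is locally ordered. Since $P_f$ acts on a morphism by taking direct images, it is monotone on every hom-poset, and the composition of two locally ordered functors is locally ordered, so $P_fP_f$ qualifies. In particular, $C(P_fP_f)$ is then a well-defined comonad on $Poset$.

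Next, I would observe that the comonad $C(P_fP_f)$ on $Poset$ induces a comonad on $Lax(\ls^{op},Poset)$ by pointwise postcomposition, which is exactly what the text denotes by $Lax(\ls^{op},C(P_fP_f))$; and its category of Eilenberg-Moore coalgebras with oplax maps is $Lax(\ls^{op},C(P_fP_f))\mbox{-}Coalg_{oplax}$. Similarly, the endofunctor $Lax(\ls^{op},P_fP_f)$ on $Lax(\ls^{op},Poset)$ is obtained by postcomposition with $P_fP_f$. Theorem~\ref{main} applied to $E = P_fP_f$ then yields the canonical isomorphism
\[
Lax(\ls^{op},P_fP_f)\mbox{-}coalg_{oplax} \simeq Lax(\ls^{op},C(P_fP_f))\mbox{-}Coalg_{oplax}.
\]
All that remains is to remark that this isomorphism commutes with the forgetful functors to $Lax(\ls^{op},Poset)$, which is immediate from the pointwise nature of the construction in Theorem~\ref{main}: the underlying functor $H:\ls^{op}\longrightarrow Poset$ of a coalgebra is preserved on both sides. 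By the defining universal property of a cofree comonad --- a comonad $D$ on a (2-)category is the cofree comonad on an endofunctor $F$ precisely when $F\mbox{-}coalg$ and $D\mbox{-}Coalg$ are canonically isomorphic over the base category --- this exhibits $Lax(\ls^{op},C(P_fP_f))$ as the cofree comonad on $Lax(\ls^{op},P_fP_f)$.

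I expect the only genuine subtlety to lie in the 2-dimensional aspect: cofreeness must be understood in the appropriate enriched (locally ordered) sense, with oplax 2-cells on both sides. But this is precisely what Theorem~\ref{main} already supplies for arbitrary locally ordered $E$, so once that result is granted, the corollary is essentially a direct specialisation together with an invocation of the standard characterisation of cofree comonads by the universal property of their coalgebra categories.
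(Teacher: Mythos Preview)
Your proposal is correct and follows essentially the same approach as the paper: apply Theorem~\ref{main} with $E = P_fP_f$, then invoke the standard characterisation of a cofree comonad via the canonical isomorphism of coalgebra categories over the base. You supply more detail than the paper does---explicitly checking that $P_fP_f$ is locally ordered and that the isomorphism commutes with the forgetful functors---but the underlying argument is the same specialisation-plus-universal-property that the paper sketches in the paragraph preceding the corollary.
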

Corollary~\ref{oldcor} means that there is a natural bijection between lax transformations
\[
p:At\longrightarrow P_fP_fAt
\]
and lax transformations 
\[
\overline{p}:At\longrightarrow C(P_fP_f)At
\]
subject to the two conditions required of a coalgebra of a
comonad given pointwise, thus by applying the construction of Theorem~\ref{constr:Gcoalg} pointwise. So it is the abstract result we need in order to characterise the coinductive trees generated by logic programs with no existential variables, extending
Theorem~\ref{constr:Gcoalg} as follows.

\begin{theorem}\label{constr:Gcoalgcount}
Let $C(P_fP_f)$ denote the cofree comonad on the endofunctor $P_fP_f$ on $Poset$. Then, given a logic program $P$ with no existential variables on $At$, defining $p_n(A(x_1,\ldots ,x_n))$ to be the set of sets of atoms in each antecedent of each unifying substituted instance of a clause in $P$ with head for which a unifying substitution agrees with $A(x_1,\ldots ,x_n)$, the corresponding
$Lax(\ls^{op},C(P_fP_f))$-coalgebra $\overline{p}:At\longrightarrow C(P_fP_f)At$ sends an atom
$A(x_1,\ldots ,x_n)$ to the coinductive tree for $A(x_1,\ldots ,x_n)$.
\end{theorem}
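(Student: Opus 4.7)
The plan is to proceed in three steps. First, I would verify that the family $(p_n)$ defined in the statement assembles into a lax transformation $p:At\longrightarrow P_fP_fAt$ in $Lax(\ls^{op},Poset)$. Second, I would apply Corollary~\ref{oldcor} to extract the corresponding coalgebra $\overline{p}:At\longrightarrow C(P_fP_f)At$. Third, I would identify the pointwise components $\overline{p}_n(A(x_1,\ldots,x_n))$ with the coinductive trees of Definition~\ref{def:cointree}.

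For the first step, the key checks are (a) that $p_n(A)$ is a finite set of finite subsets of $At(n)$, and (b) the lax inequality $(P_fP_fAt(f))(p_m(A))\leq p_n(At(f)(A))$ for every $f:n\longrightarrow m$ in $\ls$. Part (a) is exactly where the non-existential hypothesis bites: since every variable in the antecedent of a clause also appears in its head, any matching of a clause head against $A(x_1,\ldots,x_n)$ is uniquely determined by an mgm, and the substituted antecedent contains only variables from $x_1,\ldots,x_n$, hence lies inside $At(n)$; together with $P$ having finitely many clauses each with a finite antecedent, this gives $p_n(A)\in P_fP_fAt(n)$. For part (b), interpreting $f$ as a substitution, if $\theta$ witnesses a matching of a clause $B_i\gets B^i_1,\ldots,B^i_k$ against $A$ at level $m$, then $f\circ\theta$ witnesses a matching of the same clause against $At(f)(A)$ at level $n$, so the image of $\{\theta B^i_1,\ldots,\theta B^i_k\}$ under $At(f)$ is itself an element of $p_n(At(f)(A))$. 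Because $At(n)$ is discrete, this actually yields the set-theoretic containment $(P_fP_fAt(f))(p_m(A))\subseteq p_n(At(f)(A))$, which is stronger than the required lax inequality.

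For the second step, Corollary~\ref{oldcor} immediately produces a unique $\overline{p}$ corresponding to $p$ under the cofree comonad adjunction. The limit construction of $C(P_fP_f)$ recalled in the proof of Theorem~\ref{constr:Gcoalg} is computed pointwise in $n$, as encoded in Theorem~\ref{main}, so at each $n$ the component $\overline{p}_n:At(n)\longrightarrow C(P_fP_f)(At(n))$ coincides with the $C(P_fP_f)$-coalgebra supplied by Theorem~\ref{constr:Gcoalg} applied to the $P_fP_f$-coalgebra $p_n$ on the underlying set $At(n)$.

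The main obstacle, and the content of the third step, is to match the tree delivered by Theorem~\ref{constr:Gcoalg} applied to $p_n$ with the coinductive tree of Definition~\ref{def:cointree} for $A(x_1,\ldots,x_n)$. Theorem~\ref{constr:Gcoalg} returns the tree whose root is $A$, with one or-node per element of the outer $P_f$ in $p_n(A)$ and with and-node children enumerating the inner $P_f$; by the construction of $p_n$ and the uniqueness of the mgm in the non-existential setting, these or-nodes are in bijection with the clauses $C_i$ admitting a matching against $A$ via an mgm $\theta$, and their and-node children are precisely the set $\{\theta(B^i_j),\ldots,\theta(B^i_k)\}$ prescribed by Definition~\ref{def:cointree}. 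Because no existential variables occur, each such child again lies in $At(n)$, so the recursion proceeds inside the same set on which $p_n$ is defined; an induction on depth, supported by the limit characterisation of $C(P_fP_f)(At(n))$ used in the proof of Theorem~\ref{constr:Gcoalg}, then shows the two trees agree at every finite truncation and hence coincide.
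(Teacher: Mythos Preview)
Your proposal is correct and follows essentially the same route as the paper: first use the non-existential hypothesis to show each $p_n$ lands in $P_fP_fAt(n)$ and that the family is lax, then invoke Corollary~\ref{oldcor} to reduce to the pointwise situation, and finally identify the resulting coalgebra with the coinductive tree by appealing to the limit description underlying Theorem~\ref{constr:Gcoalg}. Your account is more explicit than the paper's in spelling out the lax inequality via composition of substitutions and in framing the final identification as an induction on truncation depth, but the architecture is the same.
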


\begin{proof}
The absence of existential variables ensures that any variable that appears in the antecedent of a clause must also appear in its head. So every atom in every antecedent
of every unifying substituted instance of a clause in $P$ with head for which a unifying
substitution agrees with $A(x_1,\ldots ,x_n)$ actually lies in $At(n)$. Moreover, there
are only finitely many sets of sets of such atoms. So the construction of each $p_n$ is well-defined, i.e., the image of $A(x_1,\ldots ,x_n)$  lies in $P_fP_fAt(n)$. 
The $p_n$'s collectively form a lax transformation from $At$ to $P_fP_fAt$ as substitution preserves the truth of a clause. 

By Corollary~\ref{oldcor}, $\overline{p}$ is determined pointwise. So, to construct it, we may fix $n$ and follow the proof of Theorem~\ref{constr:Gcoalg}, consistently replacing $At$ by $At(n)$. To complete the proof, observe that the construction of $p$ from a logic program $P$ matches the construction of the coinductive tree for an atom $A(x_1,\ldots ,x_n)$ if $P$ has no existential variables. So following the proof of Theorem~\ref{constr:Gcoalg} completes this proof.
\end{proof}

Theorem~\ref{constr:Gcoalgcount} models the coinductive trees generated by
ListNat as the latter has no existential
variables, but for GC, as explained in Example~\ref{ex:lp2}, the natural
construction of $p$ did
\emph{not} model the clause
\[
 \mathtt{connected(x,y)}  \gets  \mathtt{edge(x,z)},
 \mathtt{connected(z,y)} 
\]
directly, and so its extension \emph{a fortiori} could \emph{not} model the 
coinductive trees generated by $\mathtt{connected(x,y)}$.

For arbitrary logic programs, the way we defined $\overline{p}(A(x_1,\ldots ,x_n))$ in earlier papers such as~\cite{KoPS} was  in terms of a variant of the coinductive tree
generated by $A(x_1,\ldots ,x_n)$  in two key ways:
\begin{enumerate}
\item coinductive trees allow new variables to be introduced as one passes down the tree, e.g., with
\[
 \mathtt{connected(x,y)}  \gets  \mathtt{edge(x,z)},
 \mathtt{connected(z,y)}
\]
appearing directly in it, whereas, if we extended the construction of $p$ in Example~\ref{ex:lp2},
$\overline{p_1}(\mathtt{connected(x,y)})$ would not model such a clause
directly, but would rather substitute terms on $x$ and $y$ for $z$,
continuing inductively as one proceeds.
\item coinductive trees are finitely branching, as one expects in
  logic programming, whereas $\overline{p}(A(x_1,\ldots ,x_n))$ could
  be infinitely branching, e.g., for GC with an additional unary
  operation $s$.
\end{enumerate}

\section{Saturated semantics for logic programs}\label{sec:sat}

Bonchi and Zanasi's saturated semantics approach to modelling logic programming
in~\cite{BZ} was to consider $P_fP_f$ as we did in~\cite{KoP}, sending
$At$ to $P_fP_fAt$, but to ignore the inherent laxness, replacing
$Lax(\ls^{op},Poset)$ by $[ob(\ls),Set]$, where $ob(\ls)$ is the set
of objects of $\ls$ treated as a discrete category, i.e., as a category containing
only identity maps. Their central construction may be seen in a more
axiomatic setting as follows. 

For any small category $C$, let $ob(C)$ denote the discrete subcategory with the same
objects as $C$, with inclusion $I:ob(C)\longrightarrow C$. Then the functor
\[
[I,Set]:[C,Set]\longrightarrow [ob(C),Set]
\]
has a right adjoint given by right Kan extension, and that remains true when
one extends from $Set$ to any complete category, and it all enriches, e.g., over $Poset$~\cite{K}. As $ob(C)$ has no non-trivial arrows, the right Kan extension is a product, given by

\[
(ran_I H)(c) = \prod_{d \in C} Hd^{C(c,d)}
\]
By the Yoneda lemma, to give a natural transformation from $K$ to $(ran_I H)(-)$ is equivalent to giving a natural, or equivalently in this setting, a ``not
necessarily natural", transformation from $KI$ to $H$. Taking $C = \ls^{op}$ gives exactly Bonchi and Zanasi's formulation of saturated semantics~\cite{BZ}. 

It was the fact of the existence of the right adjoint, rather
than its characterisation as a right Kan extension, that enabled
Bonchi and Zanasi's constructions of saturation and desaturation,
but the description as a right Kan extension informed their syntactic
analysis.

Note for later that products in $Poset$ are given pointwise, so agree with products in $Set$. So if we replace $Set$ by $Poset$ here, and if $C$ is an ordinary category without any non-trivial $Poset$-enrichment, the right Kan extension would yield the same set as above, with an order on it determined by that on $H$.

In order to unify saturated semantics with lax semantics, we need to rephrase Bonchi and Zanasi's formulation a little. Upon close inspection, one can see that, in their semantics, they only used objects of $[ob(\ls)^{op},Set]$, equivalently $[ob(\ls),Set]$, of the form $HI$ for some $H:\ls^{op}\longrightarrow Set$~\cite{BZ}. That allows us, while making no substantive change to their body of work, to reformulate it a little, in axiomatic terms, as follows.

Let $[C,Set]_d$ denote the category of
functors from $C$ to $Set$ and ``not necessarily natural"
transformations between them, i.e., a map from $H$ to $K$ consists of,
for all $c \in C$, a function $\alpha_c:Hc\longrightarrow Kc$, without
demanding a naturality condition. The functor $[I,Set]:[C,Set]\longrightarrow [ob(C),Set]$ factors through the inclusion of $[C,Set]$ into $[C,Set]_d$ as follows:

\begin{diagram}
[C,Set] & \rTo & [C,Set]_d & \rTo & [ob(C),Set]
\end{diagram}
In this decomposition, the functor from $[C,Set]_d$ to $[ob(C),Set]$ sends a functor $H:C\longrightarrow Set$ to its restriction $HI$ to $ob(C)$ and is fully faithful. Because it is fully faithful, it follows that the inclusion of $[C,Set]$ into $[C,Set]_d$  has a right adjoint also given by right Kan extension. 

Thus one can reprhrase Bonchi and Zanasi's work to assert that the central mathematical fact that supports saturated semantics is
that the inclusion
\[
[\ls^{op},Set]\longrightarrow [\ls^{op},Set]_d
\]
has a right adjoint that sends a functor $H:\ls^{op}\longrightarrow Set$ to
the right Kan extension $ran_I HI$ of the composite $HI:ob(\ls)^{op}\longrightarrow Set$
along the inclusion $I:ob(\ls) \longrightarrow \ls$.

We can now unify lax semantics with saturated semantics by developing a precise body of theory that relates the inclusion
\[
J:[C,Set] \longrightarrow [C,Set]_d
\]
which has a right adjoint that sends $H:C\longrightarrow Set$ to $ran_I HI$, with the inclusion
\[
J:[C,Poset] \longrightarrow Lax(C,Poset)
\]
which also has a right adjoint, that right adjoint being given by a restriction of the
right Kan extension $ran_I HI$ of the composite $HI:ob(C)\longrightarrow Poset$
along the inclusion $I:ob(C)\longrightarrow C$. 

The existence of the right adjoint
follows from the main result of~\cite{BKP}, but we give an independent proof here
and a description of it in terms of right Kan extensions in order to show that Bonchi
and Zanasi's explicit constructions of saturation and desaturation apply equally
in this setting.

Consider the inclusions
\[
[C,Poset] \longrightarrow Lax(C,Poset) \longrightarrow [C,Poset]_d
\]
As we have seen, the composite has a right adjoint sending $H:C\longrightarrow Poset$ to $ran_I HI$. So, to describe a right adjoint to  
\mbox{$J:[C,Poset]\longrightarrow Lax(C,Poset)$}, we need to
restrict $ran_I HI$ so that to give a natural transformation from $K$
into the restriction $R(H)$ of $ran_I HI$ is equivalent to giving a map from $H$ to $K$ in $[C,Poset]_d$ that
satisfies the condition that, for all $f:c\longrightarrow d$, one has
$Hf.\alpha_c \leq \alpha_d.Kf$. This can be done by defining $R(H)$ to be an \emph{inserter},
which is a particularly useful kind of limit that applies to locally ordered categories and is
a particular kind of generalisation of the notion of equaliser. 

\begin{definition}\label{def:inserter}\cite{BKP}
Given parallel maps $f,g:X\longrightarrow Y$ in a locally ordered category $K$, 
an \emph{inserter} from $f$ to $g$ is an object $Ins(f,g)$ of $K$ together with a map
$i:Ins(f,g)\longrightarrow X$ such that $fi \leq gi$ and is universal such, i.e., for any
object $Z$ and map $z:Z\longrightarrow X$ for which $fz\leq gz$, there is a unique
map $k:Z\longrightarrow Ins(f,g)$ such that $ik = z$. Moreover, for any such $z$ and $z'$
for which $z\leq z'$, then $k\leq k'$, where $k$ and $k'$ are induced by $z$ and $z'$ respectively.
\end{definition}

An inserter is a form of limit. Taking $K$ to be $Poset$, the poset $Ins(f,g)$ is
given by the full sub-poset of $X$ determined by $\{ x\in X|f(x)\leq g(x)\}$. Being
limits, inserters in functor categories are determined pointwise. 

\begin{theorem}\label{thm:R}
The right adjoint $R$  to the 
inclusion $J:[C,Poset]\longrightarrow Lax(C,Poset)$ sends $H:C\longrightarrow Poset$
to the inserter in $[C,Poset]$ from $\delta_1$ to $\delta_2$
\[
\delta_1,\delta_2:(ran_I H)(-) = \prod_{d \in C} Hd^{C(-,d)}\longrightarrow
\prod_{d,d' \in C} Hd'^{C(-,d)\times C(d,d')}
\]
where $\delta_1$ and $\delta_2$ are defined to be equivalent, by Currying, to $(d,d')$-indexed collections of maps of the form
\[
(\delta_1)_{(d,d')},(\delta_2)_{(d,d')}:C(-,d)\times C(d,d')\times \prod_{d \in C} Hd^{C(-,d)}\longrightarrow
Hd'
\]

which, in turn, are defined as follows:
\begin{enumerate}
\item the $(d,d')$-component of $\delta_1c$ is determined by composing
\[
\circ_C\times id:C(c,d)\times C(d,d') \times \prod_{d \in C} Hd^{C(c,d)} \longrightarrow
C(c,d') \times \prod_{d \in C} Hd^{C(c,d)}
\]
with the evaluation of the product at $d'$
\item
the $(d,d')$-component of $\delta_2c$ is determined by evaluating the product at $d$
\[
C(c,d)\times C(d,d') \times \prod_{d \in C} Hd^{C(c,d)} \longrightarrow
C(d,d') \times Hd
\]
then composing with
\begin{diagram}
C(d,d') \times Hd & \rTo^{H\times id} &
Hd'^{Hd} \times Hd & \rTo^{eval} & 
Hd'
\end{diagram}
\end{enumerate}
\end{theorem}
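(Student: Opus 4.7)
The plan is to obtain the adjunction in three stages, building on the known composite adjunction between $[C,Poset]$ and $[C,Poset]_d$ whose right adjoint $R_0$ sends $H$ to $ran_I(HI)$, with the explicit product formula $R_0(H)(c) = \prod_{d\in C} Hd^{C(c,d)}$. The unit of that outer adjunction is the \emph{saturation} map: a not-necessarily-natural family $\alpha = \{\alpha_c : Kc \to Hc\}$ is sent to $\tilde\alpha$ defined by $\tilde\alpha_c(k)(f) = \alpha_d(Kf(k))$ for $f : c \to d$, and a direct verification shows that $\tilde\alpha$ is natural in $c$ regardless of whether $\alpha$ is. So $\alpha \leftrightarrow \tilde\alpha$ already carries the outer adjunction; the task is just to cut out the sub-bijection supported by the lax transformations.

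Next, I would verify that $R(H) = \mathrm{Ins}(\delta_1,\delta_2)$ is a well-defined object of $[C,Poset]$. Inserters are weighted limits, $Poset$ admits them, and they are computed pointwise in $Poset$-enriched functor categories; thus $R(H)(c)$ is simply the sub-poset of $R_0(H)(c)$ on those $\phi$ satisfying $\delta_1(\phi) \leq \delta_2(\phi)$, with functoriality in $c$ inherited from $R_0(H)$. By the universal property of the inserter, a natural transformation $\beta : K \to R_0(H)$ factors uniquely through $R(H) \hookrightarrow R_0(H)$ if and only if $\delta_1 \circ \beta \leq \delta_2 \circ \beta$ pointwise in $Poset$.

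The core step is to identify, via the saturation bijection, the inserter condition on $\tilde\alpha$ with the lax condition on $\alpha$. Unfolding the definitions, the $(d,d')$-component of $\delta_1\tilde\alpha_c(k)$ at $(h:c\to d,\, f:d\to d')$ is $\alpha_{d'}(Kf(Kh(k)))$, while the corresponding component of $\delta_2\tilde\alpha_c(k)$ is $Hf(\alpha_d(Kh(k)))$. Writing $k' = Kh(k)$, the inserter inequality reads $\alpha_{d'}(Kf(k')) \leq Hf(\alpha_d(k'))$ for all $h,f,k$. Specialising to $c = d$ and $h = \mathrm{id}_d$ recovers exactly the lax-naturality condition at $f$ applied to an arbitrary element $k' \in Kd$; conversely, the lax condition at $f$ applied to $k' = Kh(k)$ recovers the full inserter inequality for every $h$. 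Hence the saturation bijection restricts to a bijection between lax transformations $K \to H$ and natural transformations $K \to R(H)$. Order-preservation and naturality in $K$ and $H$ then descend from the outer adjunction, yielding the required $Poset$-enriched adjunction $J \dashv R$.

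The main difficulty I anticipate is not computational but book-keeping: one has to fix the orientation of the inequality defining $\mathrm{Ins}(\delta_1,\delta_2)$ so that the inserter condition reads as $Hf\circ\alpha_c \leq \alpha_d\circ Kf$ rather than its reverse, since the opposite orientation would produce the right adjoint to the inclusion into the oplax-transformation category rather than the lax one. The paper moves freely between the lax and oplax variants (notably in Theorem~\ref{main}), so this is the place where a convention must be pinned down; once done, the remaining checks (functoriality of $R$ in $H$, and 2-naturality of the bijection) are formal consequences of the pointwise bijection together with standard enriched category theory.
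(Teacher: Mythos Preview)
Your proposal is correct and matches the paper's approach, which offers only a one-line proof (``one simply needs to check that $\delta_1$ and $\delta_2$ are natural, and that the inserter satisfies the universal property we seek, which it does by construction''); your unwinding via the outer adjunction $[C,Poset]\hookrightarrow [C,Poset]_d$ and explicit identification of the inserter inequality with the laxness condition is precisely what that routine check amounts to. Your orientation worry is well-founded: with the paper's literal definitions of $\delta_1$, $\delta_2$ and $Ins(\delta_1,\delta_2)$ (requiring $\delta_1 i\leq\delta_2 i$), the condition on $\tilde\alpha$ comes out as $\alpha_{d'}\!\circ Kf \leq Hf\!\circ\alpha_d$, the reverse of the paper's stated lax convention, so the labels of $\delta_1$ and $\delta_2$ appear to be swapped in the statement --- but as you say, this is bookkeeping, not substance.
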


Although the statement of the theorem is complex, the proof is routine. One simply needs
to check that $\delta_1$ and $\delta_2$ are natural, which they routinely are, and that the inserter satisfies the universal property we seek, which it does by construction.

Bonchi and Zanasi's saturation and desaturation
constructions remain exactly the same: the saturation of $p:At\longrightarrow P_fP_fAt$
is a natural transformation $\overline{p}:At\longrightarrow ran_I P_fP_fAtI$ that 
factors through $Ins(\delta_1,\delta_2)$ without any change whatsoever to its construction, 
that being so because of the fact of $p$ being lax.

With this result in hand, it is routine to work systematically through Bonchi
and Zanasi's papers, using their saturation and desaturation constructions exactly
as they had them, without discarding the inherent laxness that logic
programming, cf data refinement, possesses.  

So this unifies lax semantics, which flows
from, and may be seen as an instance of, Tony Hoare's semantics for data refinement~\cite{HH,HH1,KP1}, with saturated semantics and its more denotational flavour~\cite{BM}.

 \section{Lax semantics for logic progams refined: existential variables}\label{sec:derivation}

In Section~\ref{sec:recall}, following~\cite{KoP}, we gave lax semantics for logic programs without existential variables, such as ListNat. In particular, we modelled the coinductive trees
they generate.  Restriction to non-existential examples such as ListNat is common for implementational reasons~\cite{KoPS,JohannKK15,FK15,FKSP16}, so Section~\ref{sec:recall}
allowed the modelling of coinductive trees for a natural class of logic
programs. 

Nevertheless, we would like to model coinductive trees generated by logic programming in full generality, including
examples such as that of GC. We need to refine the lax semantics of Section~\ref{sec:recall}
in order to do so, and, having just unified lax semantics with saturated semantics in Section~\ref{sec:sat}, we would like to retain that unity in making such a refinement. So that is what we do in this section.

We initially proposed such a refinement in the workshop paper~\cite{KoP1} that this paper extends, but since the workshop, we have found a further refinement that strengthens the relationship with the modelling of local state~\cite{PP2}. So our constructions here are a little different to those in~\cite{KoP1}.

In order to model coinductive trees, it follows from
Example~\ref{ex:lp2} that the endofunctor $Lax(\ls^{op},P_fP_f)$ on
$Lax(\ls^{op},Poset)$ that sends $At$ to $P_fP_fAt$, needs to be
refined as $\{ \{\mathtt{edge(x,z),connected(z,y)}\}\}$ is not an
element of $P_fP_fAt(2)$ as it involves three variables $x$, $y$ and
$z$. In general, we need to allow the image of $p_n$ to lie in the set given by applying
$P_fP_f$ to a superset of $At(n)$, one that includes $At(m)$ for all
$m\geq n$. However, we do not want to double-count: there are six injections 
of $2$ into $3$, inducing six
inclusions $At(2)\subseteq At(3)$, and one only wants to count each
atom in  $At(2)$ once. So we refine $P_fP_fAt(n)$ to become $P_fP_f(\int\! At(n))$, where $\int\! At$ is defined as follows.

Letting $Inj$ denote the category of natural numbers and injections, for any Lawvere theory $L$, there is a canonical identity-on-objects
functor $J:Inj^{op}\longrightarrow L$.  We define $\int\! At(n)$ to be the colimit of the composite functor
\begin{diagram}
n/Inj & \rTo^{cod} & Inj & \rTo^{J} & \ls^{op} & \rTo^{At} & Poset
\end{diagram}
This functor sends an injection $j:n\longrightarrow m$ to $At(m)$, with the $j$-th component of the colimiting cocone being of the
form $\rho_j:At(m)\longrightarrow \int\! At(n)$. The colimiting property is precisely the condition 
required to ensure no double-counting (see~\cite{Mac} or, for the enriched version,~\cite{K} of this construction in a general setting).

It is not routine to extend the construction of $\int\! At(n)$ to be functorial in $Inj$. So we mimic the construction on arrows used to define the monad for local state in~\cite{PP2}. 
We first used this idea in~\cite{KoP1} and we refine our use of it in this paper to make for a closer technical relationship with the semantics of local state in~\cite{PP2}: we do not fully 
understand the relationship yet, but there seems considerable potential based on the work here to make precise comparison between the role of variables in logic programming with that of worlds in modelling local state. 

In detail, the definition of $\int\! At(n)$ extends canonically to become a functor
\mbox{$\int\! At:\ls^{op}\longrightarrow Poset$} that sends a map
$f:n\longrightarrow n'$ in $\ls$ to the order-preserving function
\[
\int\! At(f):\int\! At(n') \longrightarrow \int\! At(n)
\]
determined by the colimiting property of $\int\! At(n')$ as follows: each $j'\in n'/Inj$ is, up to coherent isomorphism, the canonical injection $j':n'\longrightarrow n'+k$ for a unique natural
number $k$; that induces a cocone
\begin{diagram}
At(n'+k) & \rTo^{At(f+k)} & At(n+k) & \rTo^{\rho_j} & \int\! At(n)
\end{diagram}
where $j:n\longrightarrow n+k$ is the canonical injection of $n$ into $n+k$. It is routine
to check that this assignment respects compostion and identities, thus is functorial. 

There is nothing specific about $At$ in the above construction. So it generalises without fuss
from $At$ to apply to an arbitrary functor $H:\ls^{op}\longrightarrow Poset$.

In order to make the construction $\int\! H$ functorial in $H$, i.e., in order to make it respect 
maps $\alpha:H\Rightarrow K$, we need to refine 
$Lax(\ls^{op},Poset)$. Specifically, we need to restrict its maps to allow only those lax transformations $\alpha:H\Rightarrow K$ that are strict with respect to maps in $Inj$, i.e.,  those $\alpha$ such that  for any injection $i:n\longrightarrow m$, the diagram
\begin{diagram}
Hn & \rTo^{\alpha_n} & Kn \\
\dTo<{Hi} & & \dTo>{Ki} \\
Hm & \rTo_{\alpha_m} & Km 
\end{diagram}
commutes.
The reason for the restriction is that the colimit that defines $\int\! H(n)$ strictly
respects injections, so we need a matching condition on $\alpha$ in order to be able to define $\int\! \alpha (n)$.

Summarising this discussion yields the following:

\begin{definition}\label{def:laxinj}
Let $Lax_{Inj}(\ls^{op},Poset)$ denote the category with objects given
by functors from $\ls^{op}$ to $Poset$, maps given by lax
transformations that strictly respect injections, and composition
given pointwise.
\end{definition}

\begin{proposition}\label{prop:ff} cf~\cite{PP2}
Let $J:Inj^{op}\longrightarrow \ls$ be the canonical inclusion. 
Define 
\[
\int :Lax_{Inj}(\ls^{op},Poset)\longrightarrow Lax_{Inj}(\ls^{op},Poset)
\]
on objects as above.
Given $\alpha:H\Rightarrow K$, define $\int\! \alpha(n)$ 
by the fact that $j\in n/Inj$ is coherently isomorphic to the canonical inclusion $j:n\longrightarrow n+k$ for a unique natural number $k$, and applying the definition of $\int\! H(n)$ as a colimit to the cocone given by composing
\[
\alpha_{n+k}:H(m) = H(n+k)\longrightarrow K(n+k) = K(m)
\]
with the canonical map $K(m)\longrightarrow \int\! K(n)$ exhibiting $\int\! K(n)$ as a colimit.
Then $\int\!(-)$ is an endofunctor on  $Lax_{Inj}(\ls^{op},Poset)$.
\end{proposition}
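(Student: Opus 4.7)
The plan is to verify what the proposition adds beyond the construction on objects already described: that each $\int\alpha(n)$ is well-defined by the stated cocone recipe, that the family $\int\alpha$ is a lax transformation from $\int H$ to $\int K$ that strictly respects injections, and that $\int$ preserves identities and composition. All three will follow from the universal property of the colimits that define $\int H(n)$.

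The first step is to check that the prescription in the statement really does give a cocone over the diagram whose colimit is $\int H(n)$. Using the reduction to canonical injections $j:n\to n+k$, the cocone equation amounts to: for every injection $i:n+k\to n+k'$ with $ij=j'$,
$$\rho^K_{j'}\circ\alpha_{n+k'}\circ H(J(i)) \;=\; \rho^K_j\circ\alpha_{n+k}.$$
Since $J(i)$ is the image in $\ls$ of an injection, the strictness of $\alpha$ gives $\alpha_{n+k'}\circ H(J(i)) = K(J(i))\circ\alpha_{n+k}$, and this combines with the cocone equation $\rho^K_{j'}\circ K(J(i))=\rho^K_j$ for $\int K(n)$ to give the required identity. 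This is precisely where membership of $\alpha$ in $Lax_{Inj}$, rather than only in $Lax$, is needed; a merely lax $\alpha$ would not in general yield a cocone.

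Next I would verify that the family $\{\int\alpha(n)\}_n$ forms a lax transformation from $\int H$ to $\int K$ that is strict on injections. For $f:n\to n'$ in $\ls$, both $\int H(f)$ and the two sides of the desired inequality $\int K(f)\circ\int\alpha(n')\leq\int\alpha(n)\circ\int H(f)$ are determined by their composites with the cocone generators; unwinding those composites reduces the inequality to the pointwise lax inequality $K(f+k)\circ\alpha_{n'+k}\leq\alpha_{n+k}\circ H(f+k)$ for $\alpha$, together with the monotonicity of the map universally induced out of $\int H(n')$. When $f$ is an injection, $f+k$ is an injection and strictness of $\alpha$ at $f+k$ promotes this inequality to an equality, delivering strictness of $\int\alpha$ on injections. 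Functoriality of $\int$ on morphisms is then automatic: given $\beta:K\Rightarrow L$, both $\int\beta(n)\circ\int\alpha(n)$ and $\int(\beta\circ\alpha)(n)$ are induced from the same cocone $\rho^L_j\circ\beta_{n+k}\circ\alpha_{n+k}$, and $\int(\mathrm{id}_H)(n)=\mathrm{id}_{\int H(n)}$ for the same uniqueness reason.

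I expect the main obstacle to be the verification of lax naturality of $\int\alpha$. The map $\int H(f)$ mediates between two distinct colimits via the auxiliary functoriality $f+k$, and converting a pointwise inequality into one between universally induced maps requires showing that the unique extension of a cocone is order-preserving in the cocone itself, a standard but not wholly automatic feature of colimits in $Poset$. The well-definedness of $\int\alpha(n)$ and the functoriality of $\int$ on morphisms, by contrast, reduce to equational checks that the universal property handles cleanly.
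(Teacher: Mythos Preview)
Your proposal is correct and is precisely the ``routine, albeit lengthy'' calculation the paper alludes to without spelling out: the paper's own proof consists solely of the remark that the verification is routine using colimits. Your identification of strictness of $\alpha$ on injections as the hinge for the cocone condition, and of the order-respecting universal property of the colimit in $Poset$ as what promotes the pointwise lax inequality to one between the induced maps, are exactly the two points where care is required.
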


The proof is routine, albeit after lengthy calculation involving
colimits. 

We can now model an arbitrary logic program by a map
$p:At\longrightarrow P_fP_f\int\! At$ in $Lax_{Inj}(\ls^{op},Poset)$,
modelling ListNat as we did in Example~\ref{ex:listnat2} but now
modelling the clauses of GC directly rather than using the awkward
substitution instances of Example~\ref{ex:lp2}.

\begin{example}\label{ex:listnat3}
Except for the restriction of $Lax(\ls^{op},Poset)$ to
$Lax_{Inj}(\ls^{op},Poset)$, ListNat is modelled in exactly the same
way here as it was in Example~\ref{ex:listnat2}, the reason being that
no clause in ListNat has a variable in the tail that does not already
appear in the head. We need only observe that, although $p$ is not
strictly natural in general, it does strictly respect injections. For
example, if one views $\mathtt{list(cons( x, 0))}$ as an element of
$At(2)$, its image under $p_2$ agrees with its image under $p_1$.
\end{example}

\begin{example}\label{ex:lp3}
In contrast to Example~\ref{ex:lp2}, using $P_fP_f\int$, we can emulate
the construction of Examples~\ref{ex:listnat2} and~\ref{ex:listnat3}
for ListNat to model GC.

Modulo possible renaming of variables, $\mathtt{connected(x,y)}$ is an
element of $At(2)$. The function $p_2$ sends it to the element $\{ \{
\mathtt{edge(x,z)},\mathtt{connected(z,y)}\}\}$ of
$(P_fP_f\int\! At)(2)$. This is possible by taking $n=2$ and $m=3$ in the
formula for $\int\! At$. In contrast,
$\{ \{ \mathtt{edge(x,z)},\mathtt{connected(z,y)}\}\}$ is not an
element of $P_fP_fAt(2)$, hence the failure of Example~\ref{ex:lp2}.

The behaviour of $P_fP_f\int\! At$ on maps ensures that the lax
transformation $p$ strictly respects injections. For example, if
$\mathtt{connected(x,y)}$ is seen as an element of $At(3)$, the
additional variable is treated as a fresh variable $w$, so does not
affect the image of $\mathtt{connected(x,y)}$ under $p_3$.
\end{example}

\begin{theorem}\label{constr:Gcoalgrefined}
The functor $P_fP_f\!\int:Lax_{Inj}(\ls^{op},Poset)\longrightarrow
Lax_{Inj}(\ls^{op},Poset)$ induces a cofree comonad $C(P_fP_f\!\int)$ on
$Lax_{Inj}(\ls^{op},Poset)$. Moreover, given a logic progam $P$ qua
$P_fP_f\!\int\!$-coalgebra $p:At\longrightarrow P_fP_f\!\int\!At$, the corresponding
$C(P_fP_f\!\int )$-coalgebra $\overline{p}:At\longrightarrow C(P_fP_f\!\int )(At)$
sends an atom $A(x_1,\ldots ,x_n)\in At(n)$ to the coinductive tree
for $A(x_1,\ldots ,x_n)$.
\end{theorem}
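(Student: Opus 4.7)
The plan is to mimic the construction and proof of Theorem~\ref{constr:Gcoalg}, now working in $Lax_{Inj}(\ls^{op},Poset)$ in place of $Poset$. The theorem splits into two parts: establishing existence of the cofree comonad $C(P_fP_f\!\int)$, and then unwinding its coalgebra structure to match the coinductive tree. The new ingredient relative to Corollary~\ref{oldcor} is that $\int$ is not a pointwise operation on functors $\ls^{op}\to Poset$, so Theorem~\ref{main} does not directly lift a cofree comonad from $Poset$ componentwise; we must argue about $P_fP_f\!\int$ globally on $Lax_{Inj}(\ls^{op},Poset)$.

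For existence, I would apply Worrell's construction from~\cite{W}, in the spirit of the proof of Theorem~\ref{constr:Gcoalg}: the carrier $C(P_fP_f\!\int)(At)$ arises as the limit of the tower
\[
\ldots \longrightarrow At \times P_fP_f\!\int(At\times P_fP_f\!\int At) \longrightarrow At\times P_fP_f\!\int At \longrightarrow At.
\]
One needs $Lax_{Inj}(\ls^{op},Poset)$ to carry the requisite limits and the endofunctor $P_fP_f\!\int$ to be sufficiently accessible. Both hold: finite products and cofiltered limits in $Lax_{Inj}(\ls^{op},Poset)$ are computed pointwise in $Poset$, and the laxness-plus-injection-strictness condition on transformations is preserved by pointwise limits; meanwhile $P_fP_f$ is finitary and the colimit defining $\int\! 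H(n)$ is filtered (indexed by $n/Inj \cong \omega$), so $P_fP_f\!\int$ is accessible in the enriched sense needed for Worrell's construction. Once $C(P_fP_f\!\int)$ exists, $\overline{p}$ is defined exactly as in Theorem~\ref{constr:Gcoalg} by the cone $p_0 = id$ and $p_{k+1} = \langle id, P_fP_f\!\int(p_k)\circ p\rangle$ with $At_{k+1} = At\times P_fP_f\!\int At_k$.

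It then remains to verify by induction on $k$ that, at each $n$, the value $p_k(A(x_1,\ldots,x_n))$ records the top $k+1$ and-node layers of the coinductive tree for $A(x_1,\ldots,x_n)$ per Definition~\ref{def:cointree}. The base case is immediate; in the inductive step, the use of $\int$ is precisely the mechanism that allows fresh variables in clause tails (such as the $z$ in the second clause of GC, see Example~\ref{ex:lp3}) to be recorded honestly as elements of $At(n+k)$ without forcing a term substitution on the existing variables, while injection-strictness of $p$ prevents spurious duplicates arising from different injections $n\hookrightarrow n+k$. The main obstacle is the bookkeeping around $Lax_{Inj}(\ls^{op},Poset)$ in the existence step: checking that the pointwise cone arising from Worrell's tower really lives in $Lax_{Inj}$, i.e.\ that every component map is both lax natural and injection-strict, and that the accessibility calculation for $P_fP_f\!\int$ genuinely goes through despite $\int$ not being a pointwise construction. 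Once past these points, matching $\overline{p}$ to the coinductive tree is a mechanical induction of the same shape as in Theorem~\ref{constr:Gcoalg} and Theorem~\ref{constr:Gcoalgcount}, using Examples~\ref{ex:listnat3} and~\ref{ex:lp3} as the guiding intuition.
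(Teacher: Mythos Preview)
Your proposal is correct and follows essentially the same route as the paper: construct $C(P_fP_f\!\int)$ as the limit of the Worrell tower, using that products and the relevant limits in $Lax_{Inj}(\ls^{op},Poset)$ are computed pointwise (the paper additionally invokes~\cite{BKP} to note that the projections are strictly natural), then build $\overline{p}$ via the cone $(p_k)$ and match it coinductively against Definition~\ref{def:cointree}. One small correction: the coslice $n/Inj$ is filtered but is not isomorphic to $\omega$, since for $k<k'$ there are many injections $n+k\hookrightarrow n+k'$ fixing the first $n$ elements; what the paper uses is only that every object of $n/Inj$ is isomorphic to a canonical injection $n\hookrightarrow n+k$, and filteredness is all you need for your accessibility argument.
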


\begin{proof}
If one restricts $P_fP_f\!\int$ to $[Inj,Poset]$, there is a cofree comonad on it for
general reasons, $[Inj,Poset]$ being locally finitely presentable and $P_fP_f\!\int$ being
an accessible functor~\cite{W}. However, as we seek a little more generality than that, and for
completeness, we shall construct the cofree comonad. 

Observe that products in the category $Lax_{Inj}(\ls^{op},Poset)$ are given
pointwise, with pointwise projections. Moreover, those projections are strictly
natural, as one can check directly but which is also an instance of the main
result of~\cite{BKP}.

We can describe the cofree comonad $C(P_fP_f\!\int)$ on $Lax_{Inj}(\ls^{op},Poset)$  pointwise as the same limit
as in the proof of Theorem~\ref{constr:Gcoalg}, similarly to Theorem~\ref{constr:Gcoalgcount}. In particular, replacing $At$ by $At(n)$
and replacing $P_fP_f$ by $P_fP_f\!\int$ in the diagram in the proof of Theorem~\ref{constr:Gcoalg}, one has
$$\ldots \longrightarrow \At(n) \times (P_fP_f\!\int)(\At \times P_fP_f\!\int\!\At)(n)
\longrightarrow \At(n) \times (P_fP_f\!\int\!\At)(n) \longrightarrow \At(n)$$ with
maps determined by the projection $\pi_0:At(-)\times
(P_fP_f\!\int)At(-)\longrightarrow At(-)$, with the endofunctor 
$P_fP_f\!\int$ applied to it. One takes the limit, potentially transfinite~\cite{W}, of
the diagram. The limit property routinely determines
a functor $C(P_fP_f\!\int)At$.

It is routine, albeit tedious, to use the limiting property to verify functoriality of $C(P_fP_f\!\int )$ with respect to all maps, to define the counit and comultiplication, and to verify their axioms and the universal property. The construction of $\overline{p}$ is given
pointwise, with it following from its coinductive construction that it
yields the coinductive trees as required: because of our construction of $\int\! At$ to take the place $At$ in Theorem~\ref{constr:Gcoalgcount}, the image of $p$ lies in $P_fP_f\!\int\! At$.
\end{proof}

 The lax naturality in respect to general maps $f:m\longrightarrow n$
 means that a substitution applied to an atom $A(x_1,\ldots ,x_n)\in
 At(n)$, i.e., application of the function $At(f)$ to $A(x_1,\ldots
 ,x_n)$, followed by application of $\overline{p}$, i.e., taking the
 coinductive tree for the substituted atom, or application of the
 function $(C(P_fP_f\!\int )At)f)$ to the coinductive tree for $A(x_1,\ldots
 ,x_n)$ potentially yield different trees: the former substitutes into
 $A(x_1,\ldots ,x_n)$, then takes its coinductive tree, while the
 latter applies a substitution to each node of the coinductive tree
 for $A(x_1,\ldots ,x_n)$, then prunes to remove redundant branches.

\begin{example}\label{ex:lp4}
Extending Example~\ref{ex:lp3}, consider $\mathtt{connected(x,y)}\in
At(2)$. In expressing GC as a map $p:At\longrightarrow P_fP_f\!\int\!At$ in
Example~\ref{ex:lp3}, we put
\[
p_2(\mathtt{connected(x,y)}) = \{ \{
\mathtt{edge(x,z)},\mathtt{connected(z,y)}\}\}
\] 
Accordingly, $\overline{p}_2(\mathtt{connected(x,y)})$ is the
coinductive tree for $\mathtt{connected(x,y)}$, thus the infinite tree
generated by repeated application of the same clause modulo renaming
of variables.

If we substitute $x$ for $y$ in the coinductive tree, i.e., apply the
function $(C(P_fP_f\!\int )At)(x,x)$ to it (see the definition of
$L_{\Sigma}$ at the start of Section~\ref{sec:recall} and observe that
$(x,x)$ is a $2$-tuple of terms generated trivially by the variable
$x$), we obtain the same tree but with $y$ systematically replaced by
$x$. However, if we substitute $x$ for $y$ in
$\mathtt{connected(x,y)}$, i.e., apply the function $At(x,x)$ to it,
we obtain $\mathtt{connected(x,x)}\in At(1)$, whose coinductive tree
has additional branching as the first clause of GC, i.e.,
$\mathtt{connected(x,x)}\gets \,$ may also be applied.

In contrast to this, we have strict naturality with respect to
injections: for example, an injection $i:2\longrightarrow 3$ yields
the function $At(i):At(2)\longrightarrow At(3)$ that, modulo renaming
of variables, sends $\mathtt{connected(x,y)}\in At(2)$ to itself seen
as an element of $At(3)$, and the coinductive tree for
$\mathtt{connected(x,y)}$ is accordingly also sent by
$(C(P_fP_f\!\int )At)(i)$ to itself seen as an element of $(C(P_fP_f\!\int )At)(3)$.
\end{example}

Example~\ref{ex:lp4} illustrates why, although the condition of strict
naturality with respect to injections holds for $P_fP_f\!\int$, it does not
hold for $Lax(\ls^{op},P_fP_f)$ in Example~\ref{ex:lp2} as we did not
model the clause
\[
 \mathtt{connected(x,y)}  \gets  \mathtt{edge(x,z)},
 \mathtt{connected(z,y)}
\]
directly there, but rather modelled all substitution instances into
all available variables.

Turning to the relationship between lax semantics and saturated semantics given
in Section~\ref{sec:sat}, we need to refine our construction of the right adjoint to the
inclusion
\[
[\ls^{op},Poset]\longrightarrow Lax(\ls^{op},Poset)
\]
to give a construction of a right adjoint to the inclusion 
\[
[\ls^{op},Poset]\longrightarrow Lax_{Inj}(\ls^{op},Poset)
\]
As was the case in Section~\ref{sec:sat}, such a right adjoint exists for general reasons as an example of the main result of~\cite{BKP}. An explicit construction of it arises by emulating the construction of Theorem~\ref{thm:R}. In the statement of Theorem~\ref{thm:R}, putting $C = \ls^{op}$, we described a parallel pair of maps in $[\ls^{op},Poset]$ and constructed their inserter, the inserter being exactly the universal property corresponding to the laxness of the maps in $Lax(\ls^{op},Poset)$. Here, we use the same technique but with equaliser replacing inserter, to account for the equalities in $Lax_{Inj}(C,Poset)$. Thus we take an equaliser of two variants
of $\delta_1$ and $\delta_2$ seen as maps in $[\ls^{op},Poset]$ with domain $Ins(\delta_1,\delta_2)$ 

Again, the constructions of saturation and desaturation
remain the same, allowing us to maintain the relationship between lax semantics and saturated semantics. That said, our refinement of lax semantics constitutes a refinement of saturated semantics too, as, just as we now model $GC$ by a lax transformation $p:At\longrightarrow P_fP_f\!\int\!At$, one can now consider the saturation of this definition of $p$ rather than that of the less subtle map with codomain $P_cP_fAt$ used in previous papers such as~\cite{KoPS} and~\cite{BonchiZ15}.

\section{Semantics for variables in logic programs: local variables}\label{sec:local}

The relationship between the semantics of logic programming we propose here and that
of local state is yet to be explored fully, and we leave the bulk of it to future work. However,
as explained in Section~\ref{sec:derivation}, the definition of $\int$ was informed by
the semantics for local state in~\cite{PP2}, and we have preliminary results that
strengthen the relationship. 

\begin{proposition}\label{prop:monad}
The endofunctor $\int\!(-)$ on $Lax_{Inj}(\ls^{op},Poset)$ canonically supports the
structure of a monad, with unit $\eta_H:H\Rightarrow \int\! H$ defined, at $n$,  by the 
$id_n$ component  $\rho_{id_n}:Hn\longrightarrow \int\! Hn$ of the colimiting cocone,
and with multiplication $\mu_H:\int\! \int\! H \Longrightarrow \int\! H$ defined, at $n$, by
observing that if $m=n+k$ and $p=m+l$, then $p=n+(k+l)$ with canonical injections $j_k$,
$j_l$ and $j_{k+l}$
coherent with each other, and applying the doubly indexed colimiting property of $\int\! \int\! H$
to $\rho_{j_{k+l}}:H(p)\longrightarrow \int\! H(n)$,
\end{proposition}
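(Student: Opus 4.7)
The plan is to work pointwise at each object $n$ of $\ls$ and invoke the universal property of the colimit defining $\int\! H(n)$. Since every slice $n/Inj$ is equivalent, up to coherent isomorphism, to the poset of natural numbers (with $k$ representing the canonical injection $j_k:n\longrightarrow n+k$), the colimit reduces to a colimit of the sequence $H(n+k)$ indexed by $k \geq 0$, so the arguments effectively reduce to reasoning about filtered colimits indexed by $\mathbb{N}$. Once this is set up, the monad structure mirrors the local-state monad construction of~\cite{PP2} essentially verbatim.

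First I would check that $\eta_H$ is a morphism in $Lax_{Inj}(\ls^{op},Poset)$ and is natural in $H$. Strictness with respect to an injection $i:n\longrightarrow m$ asks that $\int\! H(i)\circ\rho_{id_n}=\rho_{id_m}\circ Hi$; unwinding the action of $\int\! H$ on $i$ given in the discussion preceding Definition~\ref{def:laxinj}, this reduces to a cocone condition. Lax naturality with respect to an arbitrary $f:n\longrightarrow n'$ in $\ls$ is immediate from the same definition, and naturality of $\eta$ in $H$ follows from the definition of $\int\!\alpha$ in Proposition~\ref{prop:ff}. Next I would unwind $\int\!\int\! H(n)$ as presented by pairs $(k,l)$ with data in $H(n+k+l)$; the multiplication collapses such data to the $(k+l)$-indexed component of $\int\! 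H(n)$, justified by the coherence $j_l\circ j_k=j_{k+l}$ of canonical injections. A short diagram chase then verifies that $\mu_H$ strictly respects injections, is lax with respect to general maps in $\ls$, and is natural in $H$.

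Finally I would verify the three monad axioms at each $n$. The two unit laws reduce to $j_{0+k}=j_k=j_{k+0}$ together with the universal property of the colimit. The hardest part will be associativity: it requires showing that the two collapses of a triple colimit $\mathrm{colim}_{k,l,m}\,H(n+k+l+m)$ onto $\mathrm{colim}_p\,H(n+p)$, namely $((k,l),m)\mapsto(k+l,m)\mapsto k+l+m$ on one side and $(k,(l,m))\mapsto(k,l+m)\mapsto k+l+m$ on the other, induce the same mediating map. This follows from associativity of addition on $\mathbb{N}$ together with the coherence $j_{(k+l)+m}=j_{k+(l+m)}$ of canonical injections, so the obstacle is bookkeeping with colimit presentations rather than genuine mathematical difficulty.
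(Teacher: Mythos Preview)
The paper does not actually supply a proof of this proposition: it is stated, and the text immediately moves on to compare it with Stark's monad for local names and with the local-state monad of~\cite{PP2}. Your sketch is precisely the routine verification the paper leaves implicit, and it follows the paper's own conventions in working with canonical injections $j_k:n\longrightarrow n+k$ as representatives. So there is nothing to compare against, and your outline is sound.

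One small correction worth making: the slice $n/Inj$ is not equivalent to the poset $\mathbb{N}$ as you write. It is equivalent to $Inj$ itself, via the functor sending $j:n\longrightarrow m$ to the complement of its image; in particular it is not a poset, and the chain of canonical injections is not final in it. This matters for what $\int\! H(n)$ actually is (the colimit over $n/Inj$ quotients by permutations of the fresh variables, whereas a colimit over the chain would not), but it does not affect your verification of the monad axioms: the paper, like you, defines $\eta$ and $\mu$ on canonical representatives $j_k$, and since every object of the slice is isomorphic to some $j_k$, the definitions extend and the unit and associativity laws still reduce to $j_{0+k}=j_k=j_{k+0}$ and $j_{(k+l)+m}=j_{k+(l+m)}$ exactly as you say. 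Just rephrase the opening sentence of your argument accordingly.
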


This bears direct comparison with the monad for local state in the case where one
has only one value, as studied by Stark~\cite{Stark}. The setting is a little different.
Stark does not consider maps in $\ls$ or laxness, and his base category is $Set$ rather than
$Poset$. However, if one restricts our definition of $\int$ and the other data for the monad
of Proposition~\ref{prop:monad} to $[Inj,Set]$, one obtains Stark's construction.

The monad for local state in~\cite{PP2} also extends Stark's construction but 
in a different direction: for local state, neither $Inj$ nor $Set$ is extended, but state,
which is defined by a functor into $Set$ is interpolated into the definition of the functor $\int$,
which restricts to $[Inj,Set]$.
That interpolation of state is closely related to our application of $P_fP_f$ to $\int$: just
as the former gives rise to a monad for local state on $[Inj,Set]$, the latter bears the
ingredients for a monad as follows. 

\begin{proposition}\label{prop:monad2}
For any endofunctor $P$ on $Set$, here is a canonical 
distributive law
\[
\int\! P(-)\longrightarrow P\!\int\! (-)
\]
of the endofunctor $P\circ -$ over the monad $\int$ on $[Inj,Set]$.
\end{proposition}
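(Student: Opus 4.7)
The plan is to define $\lambda_H$ at each $H \in [Inj,Set]$ pointwise, and at each $n$, as the canonical comparison map out of a colimit. Explicitly, for each $j \in n/Inj$, i.e., a canonical injection $j:n\longrightarrow n+k$, the cocone for $\int\! H(n)$ supplies a map $\rho_j:H(n+k)\longrightarrow \int\! H(n)$. Applying $P$ yields $P\rho_j:P(H(n+k))\longrightarrow P(\int\! H(n))$, and the collection $\{P\rho_j\}_j$ is plainly a cocone over the diagram defining $\int\!(PH)(n)$, since the coherence conditions involved were preserved by $P$. The universal property of $\int\!(PH)(n)$ thus induces a unique function
\[
\lambda_H(n): \int\!(PH)(n) \longrightarrow P\!\int\!H(n).
\]

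Next, I would verify that $\lambda_H$ is itself a map in $[Inj,Set]$, i.e., that it is natural in $n$ with respect to maps in $Inj$. For an injection $i:n\longrightarrow n'$, both sides are determined by the functoriality established for $\int$ in Proposition~\ref{prop:ff}: the action on the left relabels cocone indices via $i$ and then applies the colimit, and the action on the right is $P$ applied to that relabelling. Since $\lambda_H(n)$ is defined purely from $P$-images of cocone injections, the required square commutes by the uniqueness half of the universal property. Naturality of $\lambda$ in $H$ is equally routine: a natural transformation $\alpha:H\Longrightarrow K$ gives rise to compatible cocones, and both $P\!\int\!\alpha \circ \lambda_H$ and $\lambda_K \circ \int\! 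P\alpha$ are the unique comparison arising from the composite cocone.

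It then remains to check the two distributive law axioms. The unit axiom $\lambda_H \circ \eta_{PH} = P\eta_H$ is immediate from the construction: $\eta_{PH}$ at $n$ is $\rho_{id_n}: P(H(n)) \longrightarrow \int\!(PH)(n)$, and composing with $\lambda_H(n)$ recovers $P\rho_{id_n} = P\eta_H(n)$ by the definition of $\lambda_H$ on the $id_n$ cocone component. The multiplication axiom, namely that
\[
\lambda_H \circ \mu_{PH} \;=\; P\mu_H \circ \lambda_{\int\! H} \circ \int\! \lambda_H,
\]
is where the real work sits, and I expect this to be the main obstacle. The verification requires unwinding the doubly indexed colimit used to define $\mu_H$ in Proposition~\ref{prop:monad}: a cocone index on the left corresponds to a pair $(j_k, j_l)$ with $m=n+k$ and $p=m+l$, coherent with the single injection $j_{k+l}:n\longrightarrow n+(k+l)$. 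Both sides of the axiom can be rewritten as the unique map out of the $(j_k,j_l)$-indexed colimit of $P(H(p))$ into $P\!\int\! H(n)$, by the universal property, so they must agree. The argument amounts to tracking that the canonical comparison $\lambda$ interacts well with reindexing along $j_k$ and then $j_l$; once phrased in terms of the uniqueness of maps out of iterated colimits it becomes a formality, though the bookkeeping is tedious. With both axioms in hand, $\lambda$ is a distributive law as claimed.
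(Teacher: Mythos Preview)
Your proposal is correct and follows exactly the same approach as the paper: the paper's entire argument is the single observation that $\int$ is defined pointwise as a colimit and that applying $P$ to the colimiting cocone yields the canonical comparison map, which is precisely your construction of $\lambda_H(n)$. You supply considerably more detail than the paper does---in particular the explicit verification of naturality and of the unit and multiplication axioms---but the underlying idea is identical.
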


The canonicity of the distributive law arises as $\int$ is defined pointwise as a 
colimit, and the distributive law is the canonical comparison map determined by applying
$P$ pointwise to the colimiting cone defining $\int$.

The functor $P_fP_f$ does not quite satisfy the axioms for a monad~\cite{Engeler} (see also~\cite{BonchiZ15}), but variants
of $P_fP_f$, in particular $P_fM_f$, where $M_f$ is the finite multiset monad on $Set$, do~\cite{Engeler} (also see~\cite{BonchiZ15}). Putting $P = P_fM_f$, the distributive law of Proposition~\ref{prop:monad2} respects the monad structure of $P_fM_f$,
yielding a canonical monad structure on the composite $P_fM_f\int$. 

The full implications
of that are yet to be investigated, but, trying to emulate the analysis of local state in~\cite{PP2}, we believe we have a natural set of operations and equations that generate the monad $P_fM_f\int$. That encourages us considerably towards the possibility of seeing the semantics for logic programming, both lax and saturated, as an example of a general semantics of  local effects. We have not yet fully understood the significance of the
specific combination of operations and equations generating the monad, but we are currently
investigating it.

 \section{Conclusions and Further Work}\label{sec:concl}
Let $P_f$ be the covariant finite powerset functor on $Set$. Then,  to give a variable-free logic program $P$ is equivalent to giving a $P_fP_f$-coalgebra structure $p:At\longrightarrow P_fP_fAt$ on the set $At$ of atoms in the program. 
Now let $C(P_fP_f)$ be the cofree comonad on $P_fP_f$. Then, the $C(P_fP_f)$-coalgebra $\overline{p}:At\longrightarrow C(P_fP_f)At$ corresponding to $p$ sends an atom to the coinductive tree it generates.  This fact is the basis for both our lax semantics and
Bonchi and Zanasi's saturated semantics for logic programming. 

Two problems arise when, following standard category theoretic practice, one tries to extend this semantics to model logic programs in general by extending from $Set$ to $[\ls^{op},Set]$, where $\ls$ is the free Lawevere theory generated by a signature $\Sigma$. The first is that the natural
construction $p:At\longrightarrow P_fP_fAt$ does not form a natural transformation, so is not a map in $[\ls^{op},Set]$. 

Two resolutions were proposed to that: lax semantics~\cite{KoPS}, which we have been developing in the tradition of semantics for data refinement~\cite{HH}, and saturated semantics~\cite{BonchiZ15}, which Bonchi and Zanasi have been developing. In this paper, we have shown that the two resolutions are complementary rather than competing, the first modelling the theorem-proving aspect of logic programming, while the latter models proof search.

In modelling theorem-proving, lax semantics led us to identify and develop the notion of coinductive tree. To express the semantics, we extended $[\ls^{op},Set]$ to $Lax(\ls^{op},Poset)$, the category of strict functors and lax transformations between them. We followed standard semantic practice in extended $P_f$ from $Set$ to $Poset$ and we postcomposed the functor $At:\ls^{op}\longrightarrow Poset$ by $P_fP_f$. Bonchi
and Zanasi also postcomposed $At$ by $P_fP_f$, but then saturated. We showed that their
saturation and desaturation constructions are generated exactly by starting from $Lax(\ls^{op},Poset)$ rather than from $[ob(\ls)^{op},Set]$ as they did,
thus unifying the underlying mathematics of the two developments, supporting their
computational coherence. 

The second problem mentioned above relates to existential variables, those being variables that appear in the antecedent of a clause but not in its head. The problem of existential clauses is well-known
in the literature on theorem proving and within communities that use
term-rewriting, TM-resolution or their variants.  In
TRS~\cite{Terese}, existential variables are not allowed to appear in
rewriting rules, and in type inference, the restriction to
non-existential programs is common~\cite{Jones97}. In logic programming, the problem
of handling existential variables when constructing proofs with
TM-resolution marks the boundary between its theorem-proving and
problem-solving aspects.

Existential variables are not present in many logic programs, but they do occasionally occur in important examples, such as those developed by Sterling and Shapiro~\cite{SS}. The problem for us was that, in the presence of existential variables, the natural model $p:At\longrightarrow P_fP_fAt$ of a logic program might escape its codomain, i.e., $p_n(A)$ might not lie in $P_fP_fAt(n)$ because of the new variables. On one hand, we want to model them, but on the other, the fact of the difficulty for us means that we have semantically identified the concept of existential variable, which is positive.

In this paper, we have resolved the problem by refining $Lax(\ls^{op},Poset)$ to
$Lax_{Inj}(\ls^{op},Poset)$, insisting upon strict naturality for
injections, and by refining the construction $P_fP_fAt$ to $P_fP_f\!\int\! At$,
thus allowing for additional variables in the tail of a clause in a
logic program. That has allowed us to model coinductive
trees for arbitrary logic programs, in particular those including existential variables.
We have also considered
the effect of such refinement on saturated semantics. 

In order to refine $P_fP_f(-)$, we followed a
technique developed in the semantics of local state~\cite{PP2}. That alerted us
to the relationship between variables in logic programming with the use of worlds
in modelling local state. So, as ongoing work, we are now relating our semantics for logic programming with that for local state. For the future, we shall continue to develop that, with the hope of being able to locate our semantics of logic programming within a general semantics for local effects.

Beyond that, a question that we have not considered semantically at all yet but which our applied investigations are encouraging is that of modelling recursion. There are fundamentally two kinds of recursion that arise in logic programming as there may be recursion in terms and recursion in proofs.
For example, $\mathtt{stream(scons(x,y)) \gets stream(y)}$ is a standard  (co-)recursive definition of infinite streams in logic programming literature. More abstractly, the following program $P_1$: $\mathtt{p(f(x)) \gets p(x)}$ defines an infinite data structure $p$ with constructor $f$. For such cases,  proofs given by coinductive trees will be finite. An infinite sequence of (finite) coinductive trees will be needed to approximate the intended operational semantics of such a program, as we discuss in detail in~\cite{KoPS,KJ15}. In contrast, there are programs like $P_2$: $\mathtt{p(x) \gets p(x)}$ or $P_3$: $\mathtt{p(x) \gets p(f(x))}$  that are also recursive, but additionally their proofs as given by coinductive trees will have an infinite size. 

In~\cite{KJS16,FK16} programs like $P_1$ were called productive, for producing (infinite) data, and programs like $P_2$ and $P_3$ -- non-productive, for recursing without producing substitutions.
 The productive case amounts to a loop in the Lawvere theory $\ls$, while the non-productive case amounts to repetition within a coinductive tree, possibly modulo a substitution. This paper gives a close analysis of trees. That should set the scene for investigation of recursion, as it seems likely to yield more general kinds of graph that arise by identifying loops in $\ls$ and by equating nodes in trees. 

The lax semantics we presented here has recently inspired investigations into importance of TM-resolution (as modelled by the coinductive trees) in programming languages. In particular, TM-resolution is used in type class inference in Haskell~\cite{Jones97}. In~\cite{FKSP16,FKHF16} we showed applications of nonterminating TM-resolution in Haskell type classes. We plan to continue looking for applications of this work in programming language design beyond logic programming.



 \section*{Acknowledgements}
Ekaterina Komendantskaya would like to acknowledge the support of
EPSRC Grant EP/K031864/1-2.  John Power would like to acknowledge the
support of EPSRC grant EP/K028243/1 and Royal Society grant IE151369

\section*{Bibliography}
  
	\bibliographystyle{elsarticle-num-names} 
  \bibliography{CMCS}

\end{document}